\theoremstyle{plain}
\newtheorem{theorem}{Theorem}
\newtheorem{lemma}[theorem]{Lemma}
\newtheorem{proposition}[theorem]{Proposition}
\newtheorem{corollary}[theorem]{Corollary}
\newcommand{\ALG}{\textsc{Recognize}}
\newcommand{\wdth}{{\rm width}}
\newcommand{\hght}{{\rm height}}
\newcommand{\LL}{\mathcal{L}}
\newcommand{\layout}{$\mathcal{L}$\xspace}
\newcommand{\layoutp}{$\mathcal{L'}$\xspace}
\newcommand{\layoutstar}{$\mathcal{L^*}$\xspace}
\begin{document}

\title{Aspect Ratio Universal Rectangular Layouts\thanks{A preliminary version of this paper appeared in the \emph{Proceedings of the 16th International Conference and Workshops on Algorithms and Computation (WALCOM~2022)}, LNCS~13174, pp.~73--84, \href{http://dx.doi.org/10.1007/978-3-030-96731-4_7}{doi:10.1007/978-3-030-96731-4\_7}. Research on this paper was partially supported by the NSF award DMS-1800734.}
}

\author{Stefan Felsner\thanks{Institut f\"ur Mathematik, Technische Universit\"at Berlin, Berlin, Germany. Email: \texttt{felsner@math.tu-berlin.de}}
\and Andrew Nathenson\thanks{University of California, San Diego, CA, USA. Email: \texttt{anathenson@ucsd.edu}}
\and Csaba D. T\'oth\thanks{Department of Mathematics, California State University Northridge, Los Angeles, CA; and Department of Computer Science, Tufts University, Medford, MA, USA. Email: \texttt{csaba.toth@csun.edu}} 
}

\date{}


\maketitle

\begin{abstract}
A \emph{generic rectangular layout} (for short, \emph{layout}) is a subdivision of an axis-aligned rectangle into axis-aligned rectangles, no four of which have a point in common. Such layouts are used in data visualization and in cartography. The contacts between the rectangles represent semantic or geographic relations.
A layout is weakly (strongly) \emph{aspect ratio universal} if any assignment of aspect ratios to rectangles can be realized by a weakly (strongly) equivalent layout. We give combinatorial characterizations for weakly and strongly aspect ratio universal layouts. Furthermore, we describe a quadratic-time algorithm that decides whether a given graph is the dual graph of a strongly aspect ratio universal layout, and finds such a layout if one exists.
%
\end{abstract}

\section{Introduction}
\label{sec:intro}

A \emph{rectangular layout} (for short, \emph{layout}) is a subdivision of an axis-aligned rectangle into axis-aligned rectangular faces; it is also known as \emph{mosaic floorplan} or \emph{rectangulation}. A layout is \emph{generic} if no four faces have a point in common; see Fig.~\ref{fig:intro1}. In this paper all layouts are generic unless stated otherwise.
In the \emph{dual graph} $G(\mathcal{L})$ of a layout $\mathcal{L}$, the nodes correspond to rectangular faces, and an edge corresponds to a pair of faces that share a boundary segment of positive length~\cite{EppsteinMSV12,Rinsma87a,Rinsma87b}.

\begin{figure}[htbp]
	\centering
    \includegraphics[width=0.9\textwidth]{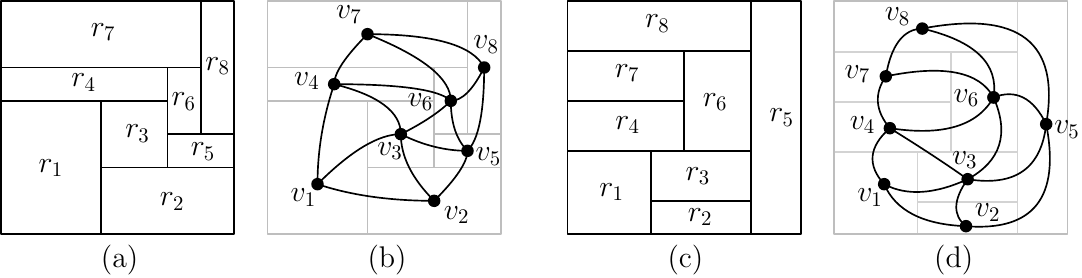}
	\caption{(a--b) A layout and its dual graph. (c--d) A sliceable layout and its dual graph.
	The two layouts are neither strongly nor weakly equivalent,
	but their dual graphs are isomorphic.} \label{fig:intro1}
\end{figure}

Two layouts are \emph{strongly equivalent} if they have isomorphic dual graphs, and the corresponding line segments between adjacent faces have the same orientation (horizontal or vertical).
Two layouts are \emph{weakly equivalent} if there are bijections between their maximal horizontal line segments as well as between their maximal vertical line segments, and the contact graphs of all segments are isomorphic plane graphs.
Strong equivalence implies weak equivalence~\cite{Felsner14}, but weak equivalence does not imply strong equivalence; see Fig.~\ref{fig:intro2} for examples. The closures of weak (resp., strong) equivalence classes under the Hausdorff distance\footnote{The distance between layouts $\LL_1$ and $\LL_2$ is the Hausdorff distance between the sets $S(\LL_1)$ and $S(\LL_2)$, where $S(\LL_i)$ is the union of all segments in $\LL_i$ for $i\in \{1,2\}$.} extend to nongeneric layouts, and a nongeneric layout may belong to the closures of multiple equivalence classes.

\begin{figure}[htbp]
	\centering
    \includegraphics[width=0.9\textwidth]{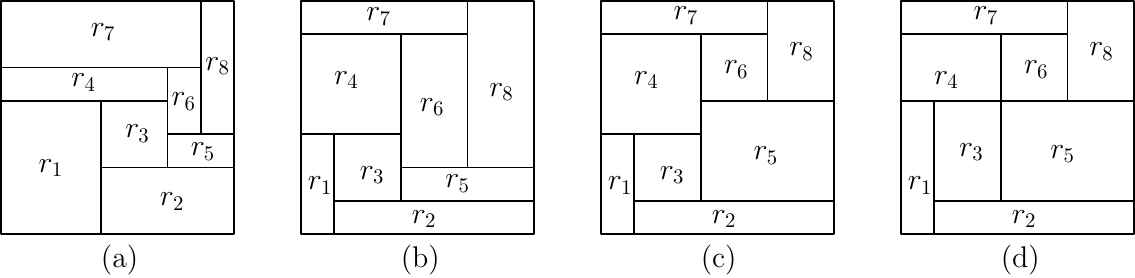}
	\caption{(a--c) Three generic layouts that are weakly equivalent. The layouts in (a) and (b) are strongly equivalent, but not strongly equivalent to the layout in (c), as the adjacencies between the pairs of rectangles $\{r_3,r_6\}$ and $\{r_4,r_5\}$ are different. (d) A nongeneric layout that is in the closure of the strong equivalence classes of the layouts in (a--b) and (c). \label{fig:intro2}}
\end{figure}

Rectangular layouts have been studied for more than 40 years, originally motivated by VLSI design~\cite{MSL76,Otten82,WimerKC88} and cartography~\cite{Raisz34}, and more recently by data visualization~\cite{KreveldS07}.
The weak equivalence classes of layouts are in bijection with Baxter permutations~\cite{AckermanBP06,Reading12,YaoCCG03}.

A graph is called a \emph{proper graph} if it is the dual graph of a generic layout. Every proper graph is a near-triangulation (a plane graph where every bounded face is a triangle, but the outer face need not be a triangle). But not every near-triangulation is a proper graph~\cite{Rinsma87a,Rinsma87b}. Ungar~\cite{Ungar53} gave a combinatorial characterization of proper graphs (see also~\cite{KozminskiK85,Thomassen84}): A near-triangulation is a proper graph if and only if it can be augmented by four vertices into a near-triangulation in which the outer face is a quadrilateral formed by the new vertices, and there are no separating triangles.
Using this characterization, proper graphs can be recognized in linear time~\cite{Hasan0K13,Nishizeki013,RahmanNN98,RahmanNN02}.

In data visualization and cartography~\cite{KreveldS07,Raisz34}, the rectangular faces correspond to entities (e.g., countries or geographic regions); adjacency between rectangles represents semantic or geographic relations, and the ``shape'' (e.g., area, width, height, aspect ratio, in-radius, etc.) of a rectangular face represents data associated with the entity. It is often desirable to use  equivalent layouts to realize different statistics associated with the same entities.
A generic layout \layout is \emph{weakly} (\emph{strongly}) \emph{area universal} if any area assignment to the rectangles can be realized by a layout weakly (strongly) equivalent to \layout. Wimer et al.~\cite{WimerKC88} showed that every generic layout is weakly area universal (see also~\cite[Thm.~3]{Felsner14}).
Eppstein et al.~\cite{EppsteinMSV12} proved that a layout is strongly area universal if and only if it is one-sided (defined below). However, no polynomial-time algorithm is known for testing whether a given graph $G$ is the dual graph of some area-universal layout.

\paragraph{Aspect ratio universal layouts.}
The \emph{aspect ratio} of an axis-aligned rectangle $r=[a,b]\times [c,d]$ is $\hght(r)/\wdth(r)=(d-c)/(b-a)$.
In some applications, the aspect ratios (rather than the areas) of the rectangles are specified. For example, in word clouds adapted to multiple languages, the aspect ratio of (the bounding box of) each word depends on the particular language.
A generic layout $\mathcal{L}$ is \emph{weakly} (\emph{strongly}) \emph{aspect ratio universal} (\emph{ARU}, for short) if any assignment of aspect ratios to the rectangles can be realized by a layout that is weakly (strongly) equivalent to \layout.

In this paper, we give a combinatorial characterization for weakly and strongly ARU layouts, and design an algorithm to recognize strongly ARU layouts. Our results are stated in Section~\ref{ssec:results}.

\subsection{Background and Terminology}
\label{ssec:def}
Before we can state our results, we review some additional key  definitions for layouts.
A \emph{rectilinear graph} is a plane straight-line graph in which every edge is horizontal or vertical.
A \emph{rectangular layout} (for short, \emph{layout}) is a rectilinear graph in which the boundary of every face (including the outer face) is a rectangle; it is \emph{generic} if the maximum vertex degree is at most three. A \emph{sublayout} of a layout \layout is a subgraph of~\layout which is a layout. A sublayout $\LL'$ of $\LL$ is \emph{trivial} if $\LL'=\LL$ or $\LL'$ is the boundary of a single face of $\LL$.
A layout $\LL$ is \emph{irreducible} if it does not have any nontrivial sublayout.
A \emph{rectangular arrangement} is a 2-connected plane straight-line graph in which every bounded face is a rectangle (however, the outer face need not be a rectangle); see Fig.~\ref{fig:intro3} for examples.

\begin{figure}[htbp]
	\centering
    \includegraphics[width=0.9\textwidth]{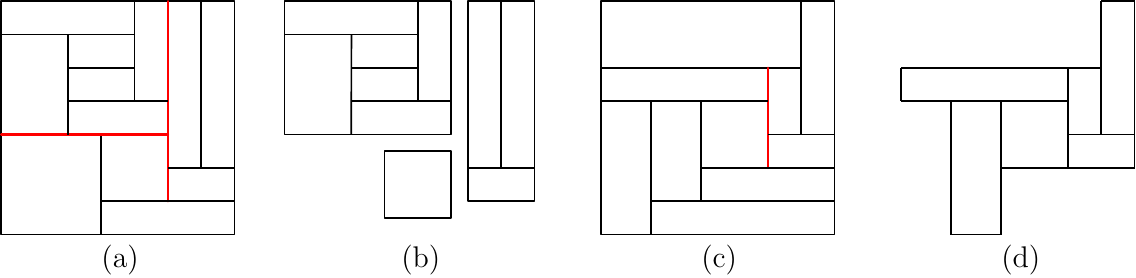}
	\caption{(a) A layout $\LL$, the red maximal segments are not one-sided. (b) Three sublayouts of $\LL$, one of which is trivial.
	(c) An irreducible layout.
        (d) A rectangular arrangement.\label{fig:intro3}}
\end{figure}

\paragraph{One-Sided Layouts.}
A \emph{segment} of a layout \layout is a path of collinear inner edges of \layout. A segment of \layout that is not contained in any other segment is \emph{maximal}. A layout \layout is \emph{one-sided} if every maximal segment is a side of at least one rectangular face of \layout; that is, for every maximal segment $s$, all other segments with an endpoint in the interior of $s$ lie in the same halfplane bounded by $s$.
For example, the layouts in Fig.~\ref{fig:intro3}a and~\ref{fig:intro3}c
are not one-sided (due to the red maximal segments).

\paragraph{Sliceable Layouts.}
A \emph{slice} in a layout \layout is a maximal segment with both endpoints incident to the outer face, and subdivides \layout into two sublayouts. A \emph{sliceable layout} (also known as \emph{slicing floorplan} or \emph{guillotine rectangulation}) is one that can be  decomposed into trivial layouts through recursive subdivision with slices; see Fig~\ref{fig:intro1}(c) for an example. The recursive subdivision of a layout $\LL$ can be represented by a \emph{binary space partition tree} (\emph{BSP-tree}), which is a binary tree where each vertex is associated with a sublayout, which is $\LL$ if the vertex is the root and a rectangular face of $\LL$ if the vertex is a leaf~\cite{deBerg2008}. For a nonleaf vertex, the tree additionally stores a slice; and two sublayouts on each side of the slice are associated with the two children. The \emph{root slice} is between two opposite sides of the bounding box.

The number of (strong equivalence classes of) sliceable layouts with $n$ rectangular faces is known to be the $n$th Schr\"oder number~\cite{YaoCCG03}. One-sided sliceable layouts are in bijection with certain pattern-avoiding permutations, closed formulas for their number have been given by Asinowski and Mansour~\cite{AsinowskiM10}; see also~\cite{MerinoM23} and \href{https://oeis.org/A078482}{OEIS~A078482} in the on-line encyclopedia of integer sequences for further references.

A \emph{windmill} in a layout is a set of four pairwise noncrossing maximal segments, called \emph{arms}, which contain the sides of a central rectangle and each arm has an endpoint on the interior of another (e.g., the maximal segments around the rectangular face $r_3$ or $r_6$ in Fig.~\ref{fig:transversal}a). A windmill is either \emph{clockwise} or \emph{counterclockwise}, depending on the orientation of the boundary of the central rectangle
when we orient each arm from the central rectangle to the other endpoint (see Figs.~\ref{fig:bad-sliceable}e--\ref{fig:bad-sliceable}f). It is well known that a layout is sliceable if and only if it does not contain a windmill~\cite{AckermanBP06}.

\paragraph{Transversal Structure.}
The dual graph $G(\LL)$ of a layout \layout encodes adjacencies between faces, but does not specify the relative positions between faces (above-below or left-right). The transversal structure (also known as \emph{regular edge-labelling}) was introduced by He~\cite{He93,KantH97} for the efficient recognition of proper graphs, and later used extensively for counting and enumerating (equivalence classes of) layouts. Fusy~\cite[Section~2]{Fusy09} distinguishes between several equivalent variants of the transversal structure, we refer to the variant called \emph{transversal edge-partition} by Fusy~\cite{Fusy09}.

We first define the  \emph{extended dual graph} $G^*(\mathcal{L})$ of a layout $\LL$, as the contact graph of the rectangular faces \emph{and} the four edges of the bounding box of $\mathcal{L}$; it is a triangulation in an outer 4-cycle without separating triangles; see Fig.~\ref{fig:transversal}.

\begin{figure}[htbp]
	\centering
    \includegraphics[width=0.8\textwidth]{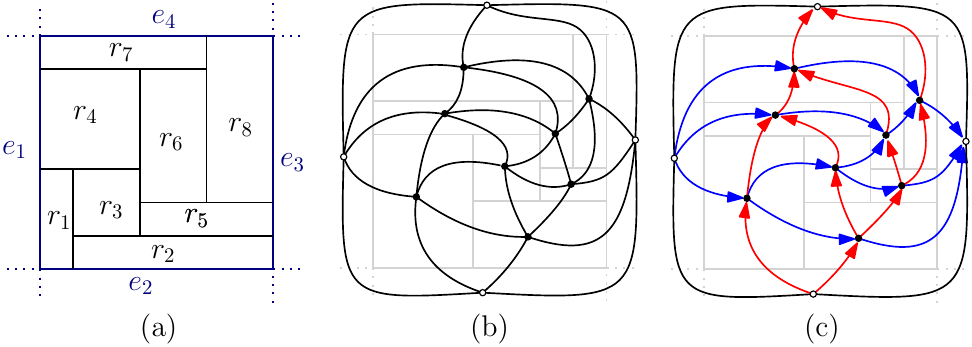}
	\caption{(a) A layout \layout bounded by edges $e_1,\ldots , e_4$.
	(b) Extended dual graph $G^*(\LL)$ with an outer 4-cycle $(e_1,\ldots ,e_4)$.
	(c) A transversal structure.}\label{fig:transversal}
\end{figure}

The \emph{transversal structure} of a layout $\LL$ comprises $G^*(\mathcal{L})$ and a bicoloring of the inner edges of $G^*(\mathcal{L})$, where red (resp., blue) edges correspond to above-below (resp., left-right) relation between two objects in contact.
An (abstract) \emph{transversal structure} is defined as a graph $G^*$, which is a triangulation of an outer 4-cycle $(S,W,N,E)$ that has no separating triangles, together with a bicoloring of the inner edges of $G^*$ such that all the inner edges incident to $S$, $W$, $N$, and $E$ are red, blue, red, and blue, respectively; and at each inner vertex the counterclockwise rotation of incident edges consists of four nonempty blocks of red, blue, red, and blue edges; see Fig.~\ref{fig:transversal}c.
The transversal structure determines a unique orientation of the edges~\cite[Proposition~2]{Fusy09}, where red (resp., blue) edges are directed bottom-up (resp., left-to-right), as indicated in Fig.~\ref{fig:transversal}c.
It is known that transversal structures are in bijection with the strong equivalence classes of generic layouts~\cite{Felsner2013,Fusy09,KantH97}.

\begin{figure}[htbp]
	\centering
    \includegraphics[width=0.8\textwidth]{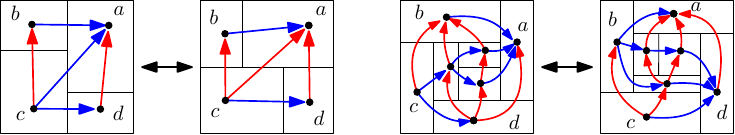}
	\caption{A flip of an empty (left) and a nonempty (right) alternating cycle.}\label{fig:flips}
\end{figure}

\paragraph{Flips and Alternating 4-Cycles.}
A sequence of flip operations can transform any transversal structure with $n$ inner vertices into any other~\cite{Felsner04,Fusy09}. Each \emph{flip} considers an \emph{alternating 4-cycle} $C$, which comprises red and blue edges alternatingly, and changes the color of every edge in the interior of $C$; see Fig.~\ref{fig:flips}. If, in particular, there is no vertex in the interior of $C$, then the flip changes the color of the inner diagonal of $C$.
Furthermore, every flip operation yields a valid transversal structure on $G^*(\LL)$, hence a new generic layout $\LL'$ that is strongly nonequivalent to \layout.
We can now establish a relation between geometric and combinatorial properties.

\begin{lemma}\label{lem:flip}
A layout \layout is one-sided and sliceable if and only if $G^*(\LL)$ admits a unique transversal structure.
\end{lemma}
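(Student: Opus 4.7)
The plan is to use the flip graph of transversal structures as the bridge between the combinatorial side ($G^*(\mathcal{L})$ having a unique transversal structure) and the geometric side (one-sided and sliceable). By the result of Felsner and Fusy cited in the paragraph just above, the transversal structures on $G^*(\mathcal{L})$ are connected by flips of alternating $4$-cycles, so $G^*(\mathcal{L})$ admits a unique transversal structure if and only if the transversal structure corresponding to $\mathcal{L}$ contains no alternating $4$-cycle, empty or non-empty. It therefore suffices to establish two parallel geometric-combinatorial equivalences: (a) $\mathcal{L}$ contains a windmill if and only if its transversal structure contains a non-empty alternating $4$-cycle; and (b) some maximal segment of $\mathcal{L}$ is a side of no rectangle if and only if its transversal structure contains an empty alternating $4$-cycle. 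Combined with the cited characterization ``sliceable iff no windmill'' and the definition of one-sided, this proves the lemma.

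For (a), I would match the four arms of a windmill at a central rectangle $r$ with the four vertices of an alternating $4$-cycle enclosing $r$ in $G^*(\mathcal{L})$. The four side-neighbours $N,E,S,W$ of $r$ alternate horizontally and vertically, so the edges $rN,rE,rS,rW$ alternate red and blue. The swirling T-junctions of the windmill force consecutive side-neighbours to be pairwise adjacent in $G^*$, and the rotation rule at each of the four T-junction vertices forces the outer edges $NE,ES,SW,WN$ to receive exactly the colours needed to make the $4$-cycle alternating; since $r$ lies inside, the cycle is non-empty. Conversely, any non-empty alternating $4$-cycle encloses at least one inner vertex $v$, and tracing the four colour blocks around $v$ outward yields four maximal segments meeting the $4$-cycle at T-junctions, which configure as a windmill at $v$.

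For (b), I would identify an empty alternating $4$-cycle built on an inner edge $e$ of $G^*$ with a contact segment $s$ between two rectangles $R_1,R_2$ that receives T-junctions from two further rectangles $R_3,R_4$ on \emph{opposite} sides of $s$. In this configuration the maximal segment $\sigma \supseteq s$ is a side of no rectangle, witnessing that $\mathcal{L}$ is not one-sided. Conversely, given any maximal segment $\sigma$ with T-junctions on both sides, I would pick a sub-segment $s \subseteq \sigma$ between two opposite T-junctions that are consecutive along $\sigma$; the four rectangles surrounding $s$ then realize an empty alternating $4$-cycle in $G^*(\mathcal{L})$.

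The main obstacle is the color-bookkeeping in (b): one must check, using the rotation rule at each inner vertex (four consecutive counterclockwise blocks of outgoing red, outgoing blue, incoming red, incoming blue edges) together with the prescribed colours at the boundary vertices $S,W,N,E$, that the four edges of the candidate $4$-cycle actually receive alternating red/blue labels in every transversal structure of $G^*(\mathcal{L})$, and not in some but not all of them. The corner cases where one of the four surrounding rectangles is replaced by an outer-boundary edge among $\{S,W,N,E\}$ must be handled separately but are completely analogous.
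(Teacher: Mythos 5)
Your proposal is correct and follows essentially the same route as the paper: reduce uniqueness of the transversal structure to the absence of alternating $4$-cycles via flip-connectivity, then match empty alternating $4$-cycles with two-sided segments and non-empty ones with windmills. The paper's proof states these two correspondences without elaboration, so your additional detail (and your flagged color-bookkeeping check) only fills in what the paper leaves implicit.
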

\begin{proof}
Assume that \layout is a layout where $G^*(\LL)$ admits two or more transversal structures. Consider a transversal structure of $G^*(\LL)$. Since any two transversal structures are connected by a sequence of flips, there exists an alternating 4-cycle. Any alternating 4-cycle with no interior vertex corresponds to a segment in \layout that is not one-sided. Any alternating 4-cycle with interior vertices corresponds to a windmill in \layout.
Consequently, \layout is not one-sided or not sliceable.

Conversely, if \layout is not one-sided (resp., sliceable), then the transversal structure of $G^*(\LL)$ contains an alternating 4-cycle with no interior vertex (resp., with interior vertices). Consequently, we can perform a flip operation, and obtain another transversal structure for $G^*(\LL)$.
\end{proof}

\subsection{Our Results}
\label{ssec:results}
We characterize strongly and weakly aspect ratio universal layouts.

\begin{restatable}{theorem}{weaktheorem}
\label{thm:weak}
A generic layout is weakly aspect ratio universal if and only if it is sliceable.
\end{restatable}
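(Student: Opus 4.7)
The plan is to establish both directions separately; the reverse direction is substantially more delicate.

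For the direction that sliceability implies weak ARU, I would argue by induction on the number of rectangles $n$ in $\mathcal{L}$. The base case $n = 1$ is immediate since a single rectangle can be realized with any aspect ratio. For $n \geq 2$, sliceability provides a straight cut (say horizontal) that partitions $\mathcal{L}$ into two sub-layouts $\mathcal{L}_1, \mathcal{L}_2$, each sliceable. By the induction hypothesis, for any aspect-ratio assignment, each $\mathcal{L}_i$ admits a realization $\mathcal{L}_i'$ with those aspect ratios. These can be rescaled uniformly (which preserves aspect ratios and weak equivalence) so that they share a common width, then stacked along the horizontal cut to form a layout that is strongly (and hence weakly) equivalent to $\mathcal{L}$ and realizes the prescribed assignment.

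For the direction that weak ARU implies sliceability, I would argue the contrapositive. First, sliceability is invariant under weak equivalence: a layout admits a spanning horizontal or vertical segment if and only if, in the contact graph of segments, some segment has its two endpoints on the opposite boundary segments of the bounding box, and this is encoded combinatorially in the contact graph. Hence if $\mathcal{L}$ is not sliceable, no layout in its weak equivalence class is sliceable either. I would then show that every non-sliceable layout imposes a nontrivial algebraic constraint on its rectangles' aspect ratios. The prototypical case is the 5-rectangle pinwheel: solving the linear system for the four corner coordinates $(a,b),(c,d)$ of the central rectangle yields a strict inequality of the form $\alpha_L\alpha_R > \alpha_T\alpha_B$ on the four arm rectangles' aspect ratios. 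An assignment that violates this inequality---for instance setting all five aspect ratios to $1$, which forces the boundary equality case $a=c$ that a generic layout cannot realize---produces an aspect-ratio assignment that no layout in the class can realize.

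The main obstacle is extending the argument from the pinwheel to an arbitrary non-sliceable layout. One route is structural: show that every non-sliceable layout contains a ``generalized pinwheel'' sub-configuration, that is, a cycle of alternating horizontal and vertical segments bounding four rectangles around an interior region, whose aspect ratios must satisfy a strict product inequality for realizability; one can then violate this inequality while choosing the aspect ratios of the remaining (sliceable) context freely. An alternative route exploits the correspondence between generic rectangulations and Baxter permutations, together with the identification of sliceable rectangulations with separable permutations: the forbidden patterns characterizing non-separable Baxter permutations should translate directly into unavoidable product inequalities on aspect ratios. Either path reduces the general case to the pinwheel computation carried out above.
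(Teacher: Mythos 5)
Your forward direction is essentially the paper's argument (induction along the slicing tree, rescale and glue), and it works, but one claim in it is wrong: the stacked layout is \emph{not} in general strongly equivalent to $\mathcal{L}$. After you rescale $\mathcal{L}_1'$ and $\mathcal{L}_2'$ to a common width, the positions of the segments meeting the cut from above and below shift, so which rectangles touch which \emph{across} the cut can change (this is exactly why the brick layouts of Fig.~\ref{fig:brick1} and Fig.~\ref{fig:brick2} are weakly but not strongly equivalent, and why strong ARU additionally requires one-sidedness). Only weak equivalence survives the gluing, which is all you need here; also note the glued layout may degenerate to a nongeneric one with a cross, which the paper absorbs into the closure of the weak equivalence class.

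The reverse direction has a genuine gap, and you have located it yourself: everything after ``the main obstacle is extending the argument from the pinwheel to an arbitrary non-sliceable layout'' is the actual content of the proof, and neither of your two proposed routes is carried out. The difficulty is that a windmill in a general non-sliceable layout is \emph{not} a sublayout (its arms interleave with the surrounding rectangles, and their union is not a rectangle), so you cannot isolate a five-rectangle pinwheel, violate a local inequality, and choose the rest of the assignment ``freely.'' Worse, under weak equivalence the adjacency structure around the windmill can be rearranged in the realization, so a purely local constraint need not persist. The paper's Lemma~\ref{lem:333} resolves this with a global quantitative argument: reduce to an irreducible non-sliceable layout (using Lemma~\ref{lem:sliceable-uniqueness} to collapse maximal sliceable sublayouts), extend the four windmill arms to rays partitioning the exterior into quadrants, assign extreme aspect ratios ($6n$ versus $(6n^2)^{-1}$, with combined values for rectangles straddling a ray), show via a deformation lemma (Lemma~\ref{lem:deform}) that the rays become monotone paths in any weakly equivalent realization, bound the aspect ratio of each resulting corner arrangement (Lemma~\ref{lem:nonsliceable-aspect-ratios}), and derive that the bounding box would need aspect ratio both at least $2$ and at most $\tfrac12$. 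Your ``product inequality on a generalized pinwheel'' would need all of this machinery to be made precise, and the Baxter/separable-permutation route is not developed at all: it is not clear how a forbidden pattern in the permutation translates into a quantitative, unavoidable constraint on aspect ratios. As it stands, the proposal proves the theorem only for layouts that literally are the five-rectangle windmill prototype.
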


\begin{restatable}{theorem}{strongtheorem}
\label{thm:equivalence}
For a generic layout  \layout, the following properties are equivalent:
\begin{enumerate}
    \item[(i)]  \layout is strongly aspect ratio universal;
    \item[(ii)]  \layout is one-sided and sliceable;
    \item[(iii)] the extended dual graph $G^*(\LL)$ of \layout admits a unique transversal structure.
\end{enumerate}
\end{restatable}
It is not difficult to show that one-sided sliceable layouts are strongly aspect ratio universal; and admit a unique transversal structure. Proving the converses, however, is more involved.

\paragraph{Algorithmic Results.}
In some applications, the rectangular layout is not specified, and we are only given the dual graph of a layout (i.e., a proper graph). This raises the following problem: Given a (proper) graph $G$, find a strongly (resp., weakly) ARU layout $\mathcal{L}$ such that $G$ is isomorphic to the dual graph of $\LL$ (i.e., $G\simeq G(\mathcal{L})$) or report that no such layout exists. Using structural properties of one-sided sliceable layouts that we develop here, we present an algorithm for recognizing the dual graphs of strongly ARU layouts.

\begin{restatable}{theorem}{algotheorem}
\label{thm:algorithm}
We can decide in $O(n^2)$ time whether a given graph $G$ with $n$ vertices is the dual of a one-sided sliceable layout.
\end{restatable}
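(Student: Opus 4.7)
My plan is to reduce the problem to Theorem~\ref{thm:equivalence}: by the equivalence of (i) and (ii), it suffices to decide whether $G$ is the dual of a one-sided sliceable layout. As a preprocessing step I would invoke the linear-time recognition algorithm for proper graphs to either reject $G$ outright or obtain a planar embedding together with the four corner vertices and the four boundary paths of any layout realization $\mathcal{L}$ with $G(\mathcal{L}) \simeq G$. This pins down the outer face structure and provides the foothold for searching the slicing decomposition.

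The core routine is a recursive peeling procedure that attempts to reconstruct the sliceable decomposition tree top-down. At each node of the recursion we hold a proper subgraph $H$ corresponding to a hypothetical sub-rectangle, equipped with designated top, bottom, left, and right boundary paths inherited from the parent call. We search for a \emph{first cut}: a maximal horizontal segment from the left boundary to the right boundary (or a vertical one from top to bottom) that splits the sub-rectangle into two sliceable pieces. Such a cut corresponds in $H$ to a pair of vertices $\ell$ on the left path and $r$ on the right path together with a path from $\ell$ to $r$ through $H$ whose removal decomposes $H$ into upper and lower proper subgraphs that share that path as a new boundary. One-sidedness of the cut is certified combinatorially: no vertex on the cut path may witness a T-junction landing on it from either side, which amounts to a local degree/adjacency test along the cut.

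The procedure recurses on the two halves; if neither a horizontal nor vertical first cut is available in some subproblem, we report that $G$ is not the dual of a one-sided sliceable layout. Crucially, the uniqueness statement in Theorem~\ref{thm:equivalence}(iii), together with the rigidity of one-sided sliceable layouts, implies that at each level there are only $O(1)$ candidate first cuts to try, so no expensive backtracking is required. Checking validity of a fixed candidate runs in time linear in $|V(H)|$. Letting $T(m)$ be the running time on a subproblem with $m$ rectangles, we obtain $T(m) = T(m_1) + T(m_2) + O(m)$ with $m_1 + m_2 = m$, which solves to $O(n^2)$ in the worst case of an unbalanced slicing tree (e.g.\ a staircase).

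The main obstacle I anticipate is proving that the first cut is essentially forced and can be identified purely from $G$ (and the boundary paths) in linear time. This should follow from the structural analysis underlying Theorem~\ref{thm:equivalence}: once we know that $H$ admits at most one transversal structure, the outer boundary dictates the orientation of the outermost cut, the one-sided condition pins down its location, and any other candidate cut would contradict uniqueness of the transversal structure. Converting this rigidity into an explicit constant-branching recipe for choosing the cut, and verifying the one-sided and sliceable conditions combinatorially on $G$ rather than on an already-drawn layout, is where the technical work will concentrate.
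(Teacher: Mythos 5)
Your high-level shape (recursively reconstruct the slicing tree, accept an $O(n^2)$ bound from an unbalanced recursion) matches the paper, but the two steps you flag as "where the technical work will concentrate" are exactly where the proposal breaks down, and the paper resolves them quite differently. First, searching for a \emph{first cut} as a left-to-right path through $H$ whose removal splits it into two proper subgraphs is essentially the recognition problem for duals of general sliceable layouts, for which the paper explicitly notes no polynomial-time algorithm is known. The move that makes the one-sided case tractable is Lemma~\ref{lem:cut}: one-sidedness forces the first slice to be a side of a single rectangle, which is therefore either a cut vertex of $G$ or a \emph{pivot} (a rectangle bounded by three sides of the bounding box and one slice). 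So one never guesses a path, only a single outer-face vertex, and the recursion peels one vertex at a time via \textsc{Split}/\textsc{Remove}. Your claim that there are only $O(1)$ candidates per level is unsubstantiated and, at the root, false: with no corner information every outer-face vertex is a candidate pivot, giving $\Theta(n)$ candidates. The appeal to uniqueness of the transversal structure cannot rescue this, both because that uniqueness is equivalent to the very property being decided (Lemma~\ref{lem:flip}) and because it is a property of $G^*(\LL)$, which depends on a choice of corners that is itself part of the search; relatedly, the corners returned by a proper-graph recognizer belong to \emph{some} realization, not necessarily a one-sided sliceable one, so they cannot be fixed in preprocessing.

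Second, even granting a constant number $c\ge 2$ of candidates per level, your recurrence $T(m)=T(m_1)+T(m_2)+O(m)$ is not what you would actually get: a candidate cut can be locally valid yet lead to an unrealizable subinstance while another candidate succeeds (this is why the paper needs the exchange arguments of Lemmata~\ref{lem:2cut} and~\ref{lem:two-distinct-corners}), so each candidate must be explored recursively. On a path-shaped slicing tree of depth $\Theta(n)$ this gives $T(m)\approx c\,T(m-1)+O(m)$, which is exponential. The paper avoids this by annotating instances with corner counts $C$ and corner pairs $P$, proving that $C(V)$ only increases and that the branching ("duplication") steps can be reached only a bounded number of times over the entire recursion tree, with the $O(n)$-way branch occurring only once; this is what yields $O(n^2)$. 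To repair your argument you would need to replace the cut-path search with the pivot/cut-vertex dichotomy and supply an analogue of this global bound on backtracking.
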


Currently, no polynomial-time algorithm is known for recognizing dual graphs of sliceable layouts~\cite{DasguptaS01,KustersS15,YeapS95} (which are weakly ARU layouts by Theorem~\ref{thm:weak}); or one-sided layouts~\cite{EppsteinMSV12}.
Previously, Thomassen~\cite{Thomassen84} gave a linear-time algorithm to recognize proper graphs if the nodes corresponding to corner rectangles are specified, using combinatorial characterizations of layouts~\cite{Ungar53}. Kant and He~\cite{He93,KantH97} described a linear-time algorithm to test whether a given graph $G^*$ is the extended dual graph of a layout, using transversal structures. Later, Rahman et al.~\cite{Hasan0K13,Nishizeki013,RahmanNN98,RahmanNN02} showed that proper graphs can be recognized in linear time (without specifying the corners). However, a proper graph may have exponentially many (strongly or weakly) nonequivalent realizations, and none of the prior algorithms guarantees to find a one-sided sliceable realization if one exists.

\paragraph{Organization.} We characterize strongly and weakly ARU layouts and prove Theorems~\ref{thm:weak}--\ref{thm:equivalence} in Section~\ref{sec:characterization}. We establish structural properties of the dual graphs of one-sided sliceable layouts in Subsection~\ref{ssec:structure}, and use them in the analysis of a quadratic-time algorithm that recognizes such graphs in Subsection~\ref{ssec:alg}. We conclude in Section~\ref{sec:con}  with open problems and one more structural property: We show that the dual graph of every one-sided sliceable layout has a vertex cut of size at most 3 (Proposition~\ref{pro:cuts}).

\section{Aspect Ratio Universality}
\label{sec:characterization}

An \emph{aspect ratio assignment} to a layout $\mathcal{L}$ is a function $\alpha$ that maps a positive real to each rectanglular face in $\mathcal{L}$. A strong (resp., weak) \emph{realization} of an aspect ratio assignment $\alpha$ to $\mathcal{L}$ is a layout $\mathcal{L}'$ that is strongly (resp., weakly) equivalent to $\LL$ with the required aspect ratios. If such a realization exists, then the aspect ratio assignment $\alpha$ is strongly (resp., weakly) \emph{realizable}. A layout is strongly (resp., weakly) \emph{aspect ratio universal} (\emph{ARU}) if every aspect ratio assignment is strongly (resp., weakly) realizable. In this section, we characterize strongly and weakly ARU layouts (Theorems~\ref{thm:weak} and~\ref{thm:equivalence}). We start with two easy observations about one-sided and sliceable layouts.

\begin{lemma}\label{lem:strong-weak}
Let \layout be a layout with an aspect ratio assignment $\alpha$.
Then every strong realization is a weak realization.
Furthermore, if \layout is one-sided, then every weak realization is a strong realization.
\end{lemma}
\begin{proof}
The first statement follows from the fact that strong equivalence implies weak equivalence.
For the second statement, note that in a one-sided layout, each maximal segment $s$ is a side of a rectangular face, and so $s$ is on the boundary between the same pairs of rectangular faces in every weak realization of $\alpha$. Consequently, all weak realizations of $\alpha$ generate the same dual graph, hence they are all strongly equivalent to \layout, and so they are strong realizations of $\alpha$.
\end{proof}

\begin{lemma}
\label{lem:sliceable-uniqueness}
Every aspect ratio assignment for a sliceable layout has a unique weak realization up to scaling and translation. Furthermore, for every $\varrho>0$ there exists a weakly realizable aspect ratio assignment $\alpha$ such that the bounding box of the unique realization of $\alpha$ has aspect ratio $\varrho$.
\end{lemma}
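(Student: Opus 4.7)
The plan is to induct on the depth of a BSP-tree $T$ that represents the sliceable layout $\mathcal{L}$. The base case is a single rectangle, whose shape is determined by its aspect ratio up to scaling and translation, and which realizes any prescribed bounding-box aspect ratio $\alpha > 0$ by being assigned aspect ratio $\alpha$ itself.

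For uniqueness in the inductive step: the root slice of $T$ partitions $\mathcal{L}$ into two sliceable sublayouts $\mathcal{L}_1$ and $\mathcal{L}_2$, each represented by a BSP-subtree. Any realization of $\mathcal{L}$ restricts to realizations of $\mathcal{L}_1$ and $\mathcal{L}_2$, so the restricted aspect ratio assignments are realizable, and by induction each sublayout has a unique realization up to scaling, with a well-defined bounding-box aspect ratio $\alpha_i$. If the root slice is horizontal, the two sublayouts share a common width $w$ in any joint realization, forcing their heights to be $w\alpha_1$ and $w\alpha_2$; so the bounding box of $\mathcal{L}$ has aspect ratio $\alpha_1+\alpha_2$. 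If the slice is vertical, the sublayouts share a common height $h$, forcing widths $h/\alpha_1$ and $h/\alpha_2$; so the bounding-box aspect ratio $\alpha$ satisfies $1/\alpha = 1/\alpha_1 + 1/\alpha_2$. In either case, fixing one outer dimension of $\mathcal{L}$ (say the width $w=1$) determines every subsequent coordinate by the recursion, proving uniqueness up to scaling and translation.

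For the existence claim, given any target $\alpha>0$ I would build a realization top-down: take an axis-aligned rectangle $R$ of aspect ratio $\alpha$ as the intended bounding box, and recursively subdivide $R$ according to $T$, placing each internal slice at an arbitrary valid position (perturbing slightly if needed so that no four faces meet at a point). This yields a generic layout strongly equivalent to $\mathcal{L}$ whose bounding box is $R$; reading off the aspect ratios of its rectangles gives a realizable aspect ratio assignment for $\mathcal{L}$ whose realization has bounding-box aspect ratio $\alpha$, as desired.

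I do not expect a substantive obstacle here. The only minor technicality is that a sliceable layout may admit several BSP-trees (consecutive parallel slices can be re-associated); this is harmless because the series/parallel combinations $\alpha_1+\alpha_2$ (horizontal) and $(1/\alpha_1 + 1/\alpha_2)^{-1}$ (vertical) are associative within a maximal band of parallel slices, so the computed bounding-box aspect ratio is an invariant of $\mathcal{L}$ rather than of the particular tree used in the induction.
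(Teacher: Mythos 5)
Your proof is correct and follows essentially the same route as the paper's: an induction over the BSP-tree (the paper composes the two children of each node bottom-up by height, you split at the root slice top-down, with the same unique-gluing argument), together with the identical top-down subdivision construction for the existence claim. The explicit series/parallel formulas $\alpha_1+\alpha_2$ and $(1/\alpha_1+1/\alpha_2)^{-1}$ and the remark that re-associating parallel slices is harmless are nice touches the paper omits, but they do not change the argument.
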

\begin{proof}
To prove the first claim, let $\alpha$ be an aspect ratio assignment to a sliceable layout \layout. Note that the restriction of $\alpha$ to a sublayout $\LL'$ of \layout is an aspect ratio assignment for $\LL'$. We proceed by induction on $k$, the height of the BSP-tree representing \layout.

In the basis step, a layout of height 0 comprises a layout with a single rectangular face, which is uniquely determined by its aspect ratio assignment up to scaling and translation. For the induction step, assume that every sublayout at height $k$ in the BSP-tree admits a unique weak realization in which all rectangular faces (at the leaves of the BSP-tree) have the required aspect ratios. A sublayout $\LL_0$ at height $k+1$ of the BSP-tree is composed of two sublayouts at height $k$, say $\LL_1$ and $\LL_2$, that share an edge. Given a weak realization of $\LL_1$, there is a unique scaling and translation that attaches $\LL_2$ to $\LL_1$, and identifies their matching boundary segments. This yields a unique weak realization of the sublayout $\LL$ at level $k+1$, up to scaling and translation.

The second claim follows trivially: Start with a bounding box of aspect ratio $\varrho$, subdivide it recursively into a layout equivalent to \layout (but otherwise arbitrarily), and define an aspect ratio assignment using the aspect ratios of the resulting leaf rectangles.
\end{proof}

By the definition of weak ARU, Lemma~\ref{lem:sliceable-uniqueness} readily implies the following.

\begin{corollary}
\label{cor:x12}
If \layout is sliceable, then it is weakly ARU.
\end{corollary}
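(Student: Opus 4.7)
The plan is to induct on the height of a BSP-tree representing $\mathcal{L}$, mirroring the induction in the proof of Lemma~\ref{lem:sliceable-uniqueness} but verifying weak rather than strong equivalence. For a BSP-tree of height zero, $\mathcal{L}$ is a single rectangle and the claim is immediate. For the induction step, let the root of a BSP-tree of $\mathcal{L}$ correspond to a slice $s$ that partitions $\mathcal{L}$ into sliceable sublayouts $\mathcal{L}_1$ and $\mathcal{L}_2$ of smaller BSP-tree height. Given an aspect ratio assignment $\alpha$, restrict it to each $\mathcal{L}_i$ and apply the induction hypothesis to obtain realizations $\mathcal{L}_1'$ and $\mathcal{L}_2'$ meeting $\alpha$ and weakly equivalent to $\mathcal{L}_1$ and $\mathcal{L}_2$, respectively. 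Since uniform scaling preserves both aspect ratios and weak equivalence, I scale the $\mathcal{L}_i'$ independently so that their sides along $s$ have equal length, then glue the two pieces along this common side to produce a layout $\mathcal{L}'$ realizing $\alpha$.

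The remaining task is to verify that $\mathcal{L}'$ is weakly equivalent to $\mathcal{L}$. The maximal horizontal and vertical segments of $\mathcal{L}'$ split into those inherited from $\mathcal{L}_1'$, those inherited from $\mathcal{L}_2'$, and the new segment $s'$ replacing $s$. A bijection with the segments of $\mathcal{L}$ is then assembled from the two sub-layout weak equivalences together with $s \mapsto s'$. The contact structure interior to each $\mathcal{L}_i'$ is preserved under scaling (and, by the induction hypothesis, matches that of $\mathcal{L}_i$); no segment of $\mathcal{L}_1'$ can meet a segment of $\mathcal{L}_2'$ since they lie on opposite sides of $s'$; and the T-contacts of segments with $s'$ inherit their order along $s'$ from the corresponding T-contacts with $s$ in $\mathcal{L}$, because uniform scaling is monotone along each side.

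The step requiring the most attention, and the reason one-sidedness is unnecessary here, is that the positions of T-contacts from $\mathcal{L}_1$ along $s$ and those from $\mathcal{L}_2$ along $s$ may shift relative to each other after independent scaling. Such a shift can destroy strong equivalence — two rectangles adjacent across $s$ in $\mathcal{L}$ may fail to remain adjacent in $\mathcal{L}'$ — but it leaves the plane contact graph of segments intact, because that graph only records which segments meet which, together with the cyclic order of incidences around each segment, and both are invariant under independent uniform scalings of the two sublayouts. This is precisely the freedom lost in Corollary~\ref{cor:111}, where one-sidedness forced one of $\mathcal{L}_1,\mathcal{L}_2$ to be a single rectangle and so suppressed all such shifts.
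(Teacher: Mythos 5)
Your proof is correct and follows essentially the same route as the paper: the same BSP-tree induction as in Lemma~\ref{lem:sliceable-uniqueness} and Corollary~\ref{cor:111}, gluing the two scaled sub-realizations along the slice and observing that cross-slice rectangle adjacencies are irrelevant to the plane contact graph of segments. The only detail the paper flags that you omit is that the gluing may align two T-contacts from opposite sides of the slice and create a cross, so the realization may be nongeneric and must be interpreted in the closure of the weak equivalence class.
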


The combination of Lemmata~\ref{lem:strong-weak} and~\ref{lem:sliceable-uniqueness} yields the following for one-sided sliceable layouts.

\begin{corollary}
\label{cor:111}
If \layout is one-sided and sliceable, then it is strongly ARU.
\end{corollary}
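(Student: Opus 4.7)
The plan is to proceed by induction on the height of a BSP-tree representing \layout, constructing for any aspect ratio assignment $\rho$ a strongly equivalent layout $\LL'$ that realizes $\rho$. The base case of a single rectangle is trivial: the required aspect ratio alone determines a rectangle up to scaling and translation.

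For the induction step, the root of the BSP-tree corresponds to a slice $s$ that splits \layout into two sublayouts $\LL_1$ and $\LL_2$. I would first verify that both $\LL_i$ are again one-sided and sliceable: sliceability is inherited from the corresponding subtree, and for one-sidedness any interior maximal segment of $\LL_i$ is also a maximal segment of \layout (no segment can cross $s$ in a generic layout), hence by hypothesis is a full side of some rectangle, which must lie in $\LL_i$. Applying the induction hypothesis to each $\LL_i$ yields realizations $\LL_i'$ of the restricted aspect ratios, each strongly equivalent to $\LL_i$. Then I scale $\LL_1'$ and $\LL_2'$ by uniform factors (which preserves all aspect ratios) so that they agree along $s$---equalizing widths for a horizontal slice, heights for a vertical one---and glue them to obtain $\LL'$. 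By construction $\LL'$ realizes $\rho$.

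The main step is to show that $\LL'$ is strongly equivalent to \layout. Weak equivalence is immediate from the induction, and the orientations of inner segments are preserved by the construction (horizontal slices produce horizontal segments, and similarly for vertical). The crux is cross-slice adjacencies in the dual graph, and this is exactly where one-sidedness is used. Because \layout is one-sided, the maximal segment $s$ is a full side of some rectangle $R$, say with $R \in \LL_1$; then $R$ spans the entirety of $s$ and is the unique rectangle of $\LL_1$ touching $s$, and its cross-slice neighbors in the dual graph are precisely those rectangles of $\LL_2$ meeting the interior of $s$. After the scaling, the image of $R$ in $\LL'$ still spans the full length of $s$, so it remains adjacent to exactly the images of those same rectangles; no spurious adjacencies are created and none are lost.

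The hard part is this cross-slice bookkeeping: a non-one-sided slice would have multiple rectangles touching $s$ from both sides, and the induced T-junctions on the two sides would need to line up at matching positions, a condition that generic aspect ratios can easily violate (which is precisely why non-one-sided sliceable layouts fail to be strongly ARU). One-sidedness removes this obstruction because on one side of every slice there is only the single rectangle $R$, with nothing to align. With cross-slice adjacencies preserved, $\LL'$ is strongly equivalent to \layout and realizes $\rho$, completing the induction.
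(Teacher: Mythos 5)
Your proof is correct and follows essentially the same route as the paper: induction over the BSP-tree as in Lemma~\ref{lem:sliceable-uniqueness}, with one-sidedness guaranteeing that each slice is a full side of a single rectangle, so the cross-slice adjacencies are independent of the aspect ratio assignment. Your cross-slice bookkeeping is in fact spelled out a bit more carefully than the paper's one-line justification (``$r_1$ or $r_2$ is a leaf of the BSP-tree''), but the underlying argument is identical.
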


\subsection{Sliceable and One-Sided Layouts}
\label{ssec:first}

Next we show that any sliceable layout that is strongly ARU must be one-sided.
We present two types of elementary layouts that are not strongly ARU, and then show that all other layouts that are not one-sided or not sliceable can be reduced to these prototypes.
A \emph{brick layout} is a sliceable layout whose BSP-tree is a complete binary tree with four leaves, and the slices at consecutive levels are orthogonal; see Figs.~\ref{fig:brick1}--\ref{fig:brick4}. A \emph{windmill layout} is generated by the four arms of a (clockwise or counterclockwise) windmill, where each arm extends to the outer boundary; see Figs.~\ref{fig:windmill1}--\ref{fig:windmill2}.

\begin{figure}[htbp]	
	\centering
	\begin{subfigure}[t]{0.12\textwidth}
		\centering
		\includegraphics[scale=0.75]{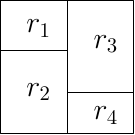}
		\caption{\label{fig:brick1}}
	\end{subfigure}
	\quad
    \begin{subfigure}[t]{0.12\textwidth}
		\centering
		\includegraphics[scale=0.75]{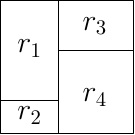}
		\caption{\label{fig:brick2}}
	\end{subfigure}
	\quad
	\begin{subfigure}[t]{0.12\textwidth}
		\centering
		\includegraphics[scale=0.75]{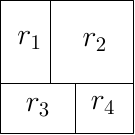}
		\caption{\label{fig:brick3}}
	\end{subfigure}
	\quad
	\begin{subfigure}[t]{0.12\textwidth}
		\centering
		\includegraphics[scale=0.75]{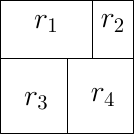}
		\caption{\label{fig:brick4}}
	\end{subfigure}
	\quad
		\begin{subfigure}[t]{0.12\textwidth}
		\centering
		\includegraphics[scale=0.75]{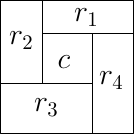}
		\caption{\label{fig:windmill1}}
	\end{subfigure}
	\quad
		\begin{subfigure}[t]{0.15\textwidth}
		\centering
		\includegraphics[scale=0.75]{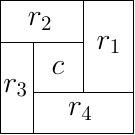}
		\caption{\label{fig:windmill2}}
	\end{subfigure}
	\caption{Layouts that are not aspect ratio universal: (a)--(d) brick layouts are sliceable but not one-sided; (e)--(f) windmill layouts are one-sided but not sliceable. \label{fig:bad-sliceable}}
\end{figure}

\begin{lemma}\label{lem:prototypes}
The brick layouts in Figs.~\ref{fig:brick1}--\ref{fig:brick4} are not strongly ARU; the windmill layouts in Figs.~\ref{fig:windmill1}--\ref{fig:windmill2} are neither strongly nor weakly ARU.
\end{lemma}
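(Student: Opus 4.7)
The plan is to exhibit, for each prototype, an explicit aspect ratio assignment that fails to be realized by any strongly (and, for the windmills, also weakly) equivalent layout. For the sliceable brick prototypes I would invoke the uniqueness of realizations from Lemma~\ref{lem:sliceable-uniqueness}: it suffices to find one assignment whose unique candidate realization violates the prescribed strong equivalence. For the non-sliceable windmills I would solve the aspect-ratio equations directly.

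For each brick prototype (Figs.~\ref{fig:brick1}--\ref{fig:brick4}), identify a two-sided maximal horizontal segment $\ell$ together with two perpendicular vertical segments, one ending on $\ell$ from above at $x=a$ and one from below at $x=b$, where the strong equivalence forces $a<b$ (after a reflection if necessary). Let $r_1,r_2$ be the two rectangles directly above $\ell$ meeting at $x=a$, and $r_3,r_4$ the two directly below $\ell$ meeting at $x=b$. Normalizing the bounding box to unit width, the aspect-ratio equations for the upper pair give $a=\alpha_2/(\alpha_1+\alpha_2)$ and those for the lower pair give $b=\alpha_4/(\alpha_3+\alpha_4)$, so the condition $a<b$ reduces to $\alpha_2\alpha_3<\alpha_1\alpha_4$. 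Choosing, e.g., $\alpha_1=\alpha_4=1$ and $\alpha_2=\alpha_3=2$ violates this inequality, so no strongly equivalent realization exists; Lemma~\ref{lem:sliceable-uniqueness} guarantees that there is no other candidate to try.

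For each windmill prototype (Figs.~\ref{fig:windmill1}--\ref{fig:windmill2}), parameterize the $5$-rectangle windmill by the positions $a<b$ of its vertical arms and $c<d$ of its horizontal arms inside a bounding box of dimensions $W\times H$. Writing the five aspect-ratio equations and substituting $\alpha_i=1$ for every rectangle (all squares), the elimination cleanly yields $a=b$: the central rectangle collapses to zero width, so the all-squares assignment admits no strongly equivalent realization. To rule out weak ARU, I would argue that any layout weakly equivalent to a $5$-rectangle windmill is itself a $5$-rectangle windmill of one of the two rotational orientations, and that the same algebra under the reflection-symmetric all-squares assignment forces the same collapse in the opposite orientation. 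The main technical obstacle is the windmill algebra---tracking the five equations so that $a=b$ drops out without a mess---together with the short observation identifying the weak equivalence class of a $5$-rectangle windmill with its pair of rotational variants.
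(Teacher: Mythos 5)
Your proposal is correct, and for the brick prototypes it is essentially the paper's own argument: both fix one assignment, use the rigidity of a sliceable realization to pin down where the two parallel slices must land relative to the two-sided segment, and conclude that the wrong order of slice positions (equivalently, a forbidden adjacency) is forced. Your inequality $\alpha_2\alpha_3<\alpha_1\alpha_4$ is a quantitative version of the paper's ``one slice below the median, the other above'' step, and your witness assignment is the paper's up to relabeling. For the windmills the mechanisms diverge, though both use the all-squares assignment. You solve the five linear equations and show the centre degenerates to a point (the algebra does work out: one gets $c=b$, then $b=a$). The paper instead observes that in any weakly equivalent layout one side of arm $r_i$ strictly contains a side of $r_{i-1}$ cyclically, so the four square side lengths satisfy $s_1<s_2<s_3<s_4<s_1$. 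That cyclic-inequality argument buys two things: it needs no coordinates, and it applies verbatim to any weakly equivalent layout because the strict containments are read directly off the preserved segment-contact structure. Your route instead leans on the deferred observation that the weak equivalence class of the $5$-rectangle pinwheel contains only the two rotational orientations; this is true and you flag it, so I would not call it a gap, but be aware it is doing real work (weak equivalence preserves only the plane contact graph of the segments, so one must argue that this contact graph forces the pinwheel shape), whereas the paper needs only the weaker fact that the cyclic ``each arm ends on the interior of the next'' pattern survives. Also, minor: the centre collapses to a point, not merely to zero width, and non-weak-ARU already implies non-strong-ARU, so the windmill case needs only the weak argument.
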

\begin{proof}
Suppose, for a contradiction, that a brick layout is strongly ARU. We may assume, by symmetry, that the brick layout $\mathcal{L}_0$ in Fig.~\ref{fig:brick1} is strongly ARU. Then there exists a strongly equivalent layout $\mathcal{L}$ for the aspect ratio assignment $\alpha(r_2)=\alpha(r_3)=1$ and $\alpha(r_1)=\alpha(r_4)=2$.
Since the same horizontal slice is a side of both $r_1$ and $r_2$,
then $\wdth(r_1)=\wdth(r_2)$. Combined with $\alpha(r_1)=2\alpha(r_2)$, this yields $\hght(r_1)=2\,\hght(r_2)$, and so the left horizontal slice is below the barycenter of $r_1\cup r_2$. Similarly, $\wdth(r_3)=\wdth(r_4)$ and $\alpha(r_4)=2\alpha(r_3)$ imply that the right horizontal slice is above the barycenter of $r_3\cup r_4$. Since barycenter of $r_1\cup r_2$ and $r_3\cup r_4$ have the same $y$-coordinate, then $r_1$ and $r_4$ are in contact in the realization $\LL$, which is not strongly equivalent to $\mathcal{L}_0$: a contradiction.

Suppose that a windmill layout is weakly ARU. We may assume, by symmetry, that the windmill layout $\mathcal{L}_1$ in Fig.~\ref{fig:windmill1} is weakly ARU. Then there exists a weakly equivalent layout $\mathcal{L}$ for the aspect ratio assignment $\alpha(c)=\alpha(r_1)=\alpha(r_2)=\alpha(r_3)=\alpha(r_4)=1$.
In particular, $r_1,\ldots ,r_4$ are squares; denote their side lengths by $s_i$, for $i=1,\ldots ,4$. Note that one side of $r_i$ strictly contains a side of $r_{i-1}$ for $i=1,\ldots , 4$ (with arithmetic modulo 4). Consequently, $s_1<s_2<s_3<s_4<s_1$, which is a contradiction.
\end{proof}

\begin{lemma}\label{lem:222}
If a layout is sliceable but not one-sided, then it is not strongly ARU.
\end{lemma}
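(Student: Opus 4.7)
My plan is to reduce any sliceable but non-one-sided layout to the brick prototypes of Lemma~\ref{lem:prototypes} via a sub-arrangement argument. First, I would locate a maximal two-sided segment $s$ of \layout, without loss of generality horizontal at height $y_0$. Since $s$ is two-sided, there exist interior T-junctions $p^+=(x^+,y_0)$ on top and $p^-=(x^-,y_0)$ on bottom; I choose them to be \emph{consecutive} along $s$ with (WLOG) $x^+<x^-$, so that no other interior T-junction lies in $(x^+,x^-)$. The four rectangles incident to $s$ near this interval---namely $r_2$ above-left of $p^+$, $r_1$ above-right of $p^+$ (whose bottom side covers $[x^+,x^-]$), $r_3$ below-left of $p^-$ (whose top side covers $[x^+,x^-]$), and $r_4$ below-right of $p^-$---together with $s$ form precisely the combinatorial pattern of a brick.

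Next, I would pass to the BSP-tree of \layout and locate an internal node $v$ whose slice is horizontal at height $y_0$ and lies on $s$, such that the two child-subtrees of $v$ contribute vertical slices reaching $s$ at $x^+$ and $x^-$, respectively. Such a $v$ must exist because $s$ is two-sided and \layout is sliceable. Its rectangle $R_v$ then decomposes naturally into four rectangular sub-arrangements $A_{TL}$, $A_{TR}$, $A_{BL}$, $A_{BR}$ (the four quadrants into which $R_v$ is cut by the horizontal slice through $s\cap R_v$ and the two vertical slices at $x^+$ and $x^-$). Each sub-arrangement is itself a sliceable sub-layout of \layout, and together these four super-rectangles form a brick inside $R_v$.

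I would then invoke the second part of Lemma~\ref{lem:sliceable-uniqueness}: for any target aspect ratios $\beta_{TL},\beta_{TR},\beta_{BL},\beta_{BR}$, I can pick aspect-ratio assignments on the individual rectangles of each sub-arrangement so that $A_{TL},\ldots,A_{BR}$ realize as axis-aligned rectangles with those prescribed aspect ratios. Choosing $\beta_{TL}=\beta_{BR}=1$ and $\beta_{TR}=\beta_{BL}=2$ as in the brick prototype of Lemma~\ref{lem:prototypes}, and assigning arbitrary aspect ratios to any rectangles of \layout outside $R_v$, the brick non-ARU argument---applied now at the super-rectangle level---forces the two vertical slices at $x^+$ and $x^-$ to swap sides (or coincide) in any equivalent realization, contradicting strong equivalence with \layout.

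The main obstacle is the structural claim in the second paragraph: one must verify that a BSP-node $v$ exists whose decomposition yields four sliceable sub-arrangements cleanly aligned with the two offset T-junctions $p^+$ and $p^-$, even when $s$ is composed of several BSP-slices at different tree depths. Once this is in hand, the third step is essentially a reprise of the brick argument, now applied to super-rectangles rather than individual rectangles. A secondary subtlety is checking that the aspect ratios of the rectangles outside $R_v$ (those corresponding to ancestors of $v$ in the BSP-tree) can be assigned so as not to obstruct the intended realization of $R_v$; this again follows from the flexibility granted by the second part of Lemma~\ref{lem:sliceable-uniqueness}.
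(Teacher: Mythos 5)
Your overall strategy coincides with the paper's: isolate a two-sided segment, build a brick out of four sliceable sub-arrangements around it, use the second part of Lemma~\ref{lem:sliceable-uniqueness} to prescribe the aspect ratios of those sub-arrangements, and run the brick argument of Lemma~\ref{lem:prototypes} at the super-rectangle level. The genuine gap is exactly the one you flag in your second paragraph, and it is created by your choice of $p^+$ and $p^-$ in the first paragraph: \emph{consecutive} T-junctions on opposite sides of $s$ need not be aligned with any BSP node. Concretely, suppose the vertical slices reaching $s$ from above occur, in recursion order, at $x=2$ and then (inside the right child) at $x=2.5$, while the first vertical slice from below occurs at $x=3$. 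The BSP decomposes the relevant rectangle with junctions at $2$ (above) and $3$ (below), whereas your consecutive pair is $(2.5,3)$; the four faces incident to $s$ around $(2.5,3)$ do not tile a rectangle and are not separated by slices of a common BSP node, so neither Lemma~\ref{lem:sliceable-uniqueness} nor the sublayout reduction applies to them. As stated, the structural claim you defer is false, so the proof does not close.

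The repair is to select the junctions by the recursion rather than by adjacency along $s$, which is what the paper does. In a generic sliceable layout the maximal segment $s$ is a single slice of some BSP rectangle, splitting it into $T$ (above) and $B$ (below). Since $s$ is two-sided, the recursion inside $T$ must eventually apply a vertical slice to a descendant whose bottom side is all of $s$; take the \emph{first} such slice, at abscissa $x_a$, and symmetrically the first vertical slice from below, at $x_b$ (with $x_a\neq x_b$ by genericity). The two rectangles being sliced both have width equal to that of $s$, so their union is a rectangle, and the four resulting pieces are BSP nodes, hence sliceable sublayouts tiling that rectangle in the brick pattern of Figs.~\ref{fig:brick1}--\ref{fig:brick4}. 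From there your third paragraph goes through: prescribe bounding-box aspect ratios $1,1,2,2$ for the four pieces via Lemma~\ref{lem:sliceable-uniqueness}, extend the assignment arbitrarily outside, and the prototype argument forces a contact between two super-rectangles that are not adjacent in \layout, contradicting strong equivalence. Your residual worry about the rectangles outside the brick is handled by the sublayout reduction (a nonrealizable assignment on a sublayout extends to one on \layout), which the paper states at the outset of its proof.
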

\begin{proof}
Let $\mathcal{L}$ be a sliceable but not one-sided layout. It suffices to show that any of its sublayouts are not strongly ARU, because any nonrealizable aspect ratio assignment for a sublayout can be expanded arbitrarily to an aspect ratio assignment for the entire layout.

We claim that $\mathcal{L}$ contains a sublayout strongly equivalent to a brick layout in Figs.~\ref{fig:brick1}--\ref{fig:brick4}. As \layout is not one-sided, it contains a maximal segment $\ell$ which is not the side of any rectangular face. We may assume, without loss of generality, that $\ell$ is vertical. Since \layout is sliceable, every maximal segment is a slice of a sublayout that subdivides it into two smaller sublayouts.
However, $\ell$ is not the side of any rectangular face, consequently the two smaller sublayouts on the left and right of $\ell$ must be subdivided horizontally in the recursion. Let $\ell_{\rm left}$ and $\ell_{\rm right}$ be the first maximal horizontal segments on the left and right of $\ell$, respectively. Assume that they each subdivide a sublayout adjacent to $\ell$ into $r_1$ and $r_2$ (on the left) and $r_3$ and $r_4$ (on the right). The bounding boxes of the sublayouts $r_1,\ldots , r_4$ comprise a layout equivalent to a brick layout in Figs.~\ref{fig:brick1}--\ref{fig:brick4}.
By Lemma~\ref{lem:prototypes}, there exists an aspect ratio assignment to $\mathcal{L}$ not realizable by a strongly equivalent layout.
\end{proof}

\begin{figure}[htbp]	
	\centering
	\begin{subfigure}[t]{0.45\textwidth}
		\centering
		\includegraphics[scale=1]{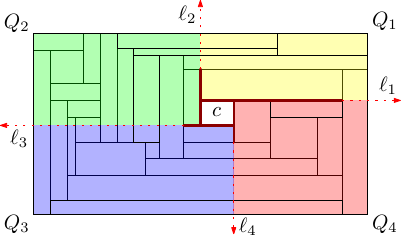}
		\caption{A nonsliceable layout, a windmill, where rays $\ell_1,\ldots , \ell_4$ define four quadrants $Q_1,\ldots , Q_4$.\label{fig:dividing:1}}
	\end{subfigure}
	\quad\quad\quad
	\begin{subfigure}[t]{0.45\textwidth}
		\centering
		\includegraphics[scale=1]{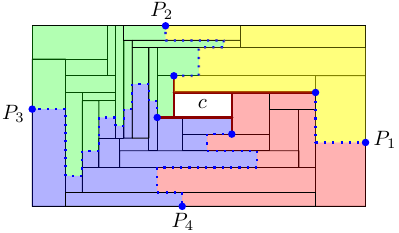}
		\caption{A weakly equivalent layout, where four paths
		define rectangular arrangements.\label{fig:dividing:2}}
	\end{subfigure}

	\caption{The rays $\ell_1,\ldots ,\ell_4$ deform into monotone paths in a weakly equivalent layout. \label{fig:dividing}}
\end{figure}

As noted above, every nonsliceable layout contains a windmill~\cite{AckermanBP06}. In the remainder of this section we strengthen this claim, and show that every nonsliceable layout contains a windmill together with pairwise disjoint $x$- or $y$-monotone paths to some points in the interiors of distinct sides of the bounding box.
In a nutshell, our proof goes as follows:
Consider an arbitrary windmill in a nonslicable layout \layout. We subdivide the exterior of the windmill into four \emph{quadrants}, by extending the arms of the windmill into rays $\ell_1,\ldots, \ell_4$ to the bounding box; see Fig.~\ref{fig:dividing:1}. Each rectangular face of \layout lies in a quadrant or in the union of two consecutive quadrants. We assign aspect ratios to each rectangular face based on which quadrant(s) it lies in. If these aspect ratios can be realized by a layout \layoutp that is weakly equivalent to \layout, then the rays $\ell_1,\ldots, \ell_4$ will be ``deformed'' into $x$- or $y$-monotone paths that subdivide \layoutp into a sublayout in the center of the windmill and four rectangular arrangements, each incident to a unique corner of the bounding box; as in Fig.~\ref{fig:dividing:2}. We assign the aspect ratios for rectangular faces in \layoutp so that these arrangements can play the same role as the rectangles $r_1,\ldots , r_4$ in a windmill layout in Figs.~\ref{fig:windmill1}--\ref{fig:windmill2}. We continue with the details.

We clarify what we mean by a ``deformation'' of a (horizontal) ray $\ell$; see Fig.~\ref{fig:dividing}.
\begin{lemma} \label{lem:deform}
Let a ray $\ell$ be the extension of a horizontal segment in a layout \layout such that $\ell$ does not contain any other segment and it intersects the rectangular faces $r_1,\ldots ,r_k$ in this order. Suppose that \layoutp is weakly equivalent to \layout, and the corresponding faces $r_1',\ldots , r_k'$ of \layoutp are sliced by horizontal segments $s_1,\ldots , s_k$, respectively. Then there exists an $x$-monotone path comprised of horizontal edges $s_1,\ldots , s_k$, and vertical edges along vertical segments of the layout \layoutp.
\end{lemma}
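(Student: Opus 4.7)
The plan is to construct the $x$-monotone path explicitly as an alternating concatenation of the horizontal slices $s_i$ with vertical connectors that lie on vertical segments of $\LL'$. The first step is to identify, for each $1 \le i \le k-1$, the vertical segment of $\LL'$ that will carry the $i$-th connector. In $\LL$, between $r_i$ and $r_{i+1}$ the ray $\ell$ crosses some maximal vertical segment $v_i$; since by hypothesis $\ell$ contains no segment of $\LL$, the crossing occurs at an interior point of $v_i$ (not at one of its endpoints). By the weak equivalence, $v_i$ corresponds to a maximal vertical segment $v_i'$ of $\LL'$, and because the bijection preserves horizontal/vertical type and the planar embedding of the segment-contact graph, $v_i'$ is vertical and has $r_i'$ on its left and $r_{i+1}'$ on its right.

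Next I would prove the key combinatorial claim: the right endpoint of $s_i$ and the left endpoint of $s_{i+1}$ both lie on $v_i'$, and hence share an $x$-coordinate. Because $\ell$ crosses $v_i$ at an interior point of $v_i$, at the height of $\ell$ the right side of $r_i$ is locally covered by $v_i$, and symmetrically the left side of $r_{i+1}$ is locally covered by $v_i$. This local incidence pattern between a rectangle and its neighboring segments is part of the combinatorial data preserved by the weak equivalence, so in $\LL'$ the right side of $r_i'$ at the height of $s_i$ lies on $v_i'$, and similarly the left side of $r_{i+1}'$ at the height of $s_{i+1}$ lies on $v_i'$. Consequently, the vertical segment $u_i$ joining the right endpoint of $s_i$ to the left endpoint of $s_{i+1}$ is a subsegment of $v_i'$, i.e., a vertical edge lying along a vertical segment of $\LL'$, as required.

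Concatenating $s_1, u_1, s_2, u_2, \ldots, u_{k-1}, s_k$ then yields a path made of horizontal edges $s_i$ and vertical edges $u_i$ along vertical segments of $\LL'$. The path is $x$-monotone because each $r_i'$ has strictly positive width between $v_{i-1}'$ and $v_i'$, so the $x$-coordinates $x(v_1') < x(v_2') < \cdots < x(v_{k-1}')$ are strictly increasing while each horizontal $s_i$ joins $v_{i-1}'$ to $v_i'$ without backtracking. I expect the main obstacle to be the endpoint-alignment claim in the second paragraph: to make it rigorous one must track, through the bijection of weak equivalence, exactly which maximal vertical segment of $\LL'$ bounds $r_i'$ on the right at the height of $s_i$, exploiting the fact that $\ell$ crosses $v_i$ transversely in its interior (rather than grazing an endpoint or a T-junction). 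This is the combinatorial content that makes the ``deformation'' of $\ell$ well defined.
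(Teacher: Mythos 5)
Your proposal is correct and follows essentially the same route as the paper's proof: the key point in both is that, since $\ell$ contains no segment, consecutive rectangles $r_i$ and $r_{i+1}$ lie on opposite sides of a vertical segment, this adjacency is preserved under weak equivalence, and hence the right endpoint of $s_i$ and the left endpoint of $s_{i+1}$ lie on the same vertical segment of $\mathcal{L'}$. Your added detail on the interior crossing of $v_i$ and the explicit $x$-monotonicity check only elaborates what the paper leaves implicit.
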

\begin{proof}
Assume, without loss of generality, that the ray $\ell$ points to the right. Since $\ell$ does not contain any other segment and it intersects the rectangular faces $r_1,\ldots ,r_k$ in this order, then $r_i$ and $r_{i+1}$ are on opposite sides of a vertical segment for $i=1,\ldots ,k-1$. As \layoutp is weakly equivalent to \layout, then $r_i'$ and $r_{i+1}'$ are on opposite sides of a vertical segment for $i=1,\ldots ,k-1$; see~\cite{Felsner14}. In particular, the right endpoint of $s_i$ and the left endpoint of $s_{i+1}$ are on the same vertical segment in \layoutp, for all $i=1,\ldots k-1$. These vertical segments, together with $s_1,\ldots , s_k$ form an $x$-monotone path, as required.
\end{proof}

The next lemma allows us to estimate the aspect ratio of the bounding box of a rectangular arrangement in terms of the aspect ratios of individual rectangles.

\begin{lemma} \label{lem:nonsliceable-aspect-ratios}
If every rectangle in a rectangular arrangement has aspect ratio $\alpha m$, where $\alpha>0$ and $m$ is the number of rectangles in the arrangement, then the aspect ratio of the bounding box of the arrangement is at least $\alpha$ and at most $\alpha m^2$.
\end{lemma}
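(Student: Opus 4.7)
Let $W$ and $H$ denote the width and height of the bounding box of the arrangement, so the target inequalities are $\alpha \le H/W \le \alpha m^2$. The two bounds are symmetric, and each follows by looking at a single ``boundary strip'' of the arrangement and combining the constraint $\hght(r)=\alpha m\cdot \wdth(r)$ with the trivial bounds $\wdth(r)\le W$ and $\hght(r)\le H$.

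For the upper bound $H/W\le \alpha m^2$, I would isolate the rectangles whose left side lies on the left edge of the bounding box. Because the arrangement is a rectangulation of its bounding box, these rectangles tile the left edge, so their heights sum to exactly $H$; call them $r_{i_1},\dots,r_{i_k}$ with $k\le m$. For each such rectangle, $\hght(r_{i_j})=\alpha m\cdot\wdth(r_{i_j})\le \alpha m\cdot W$. Summing yields $H=\sum_{j=1}^k \hght(r_{i_j})\le k\cdot \alpha m\cdot W\le \alpha m^2\cdot W$.

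For the lower bound $H/W\ge \alpha$, the symmetric argument uses the rectangles whose top side lies on the top edge of the bounding box; their widths sum to $W$, and each such width equals $\hght(r)/(\alpha m)\le H/(\alpha m)$. Summing over at most $m$ such rectangles gives $W\le H/\alpha$.

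The only nontrivial point is the claim that the rectangles meeting the left (resp.\ top) edge of the bounding box tile that edge, which I would justify in one sentence by appealing to the definition of a rectangular arrangement (a $2$-connected rectilinear subgraph with rectangular bounded faces whose outer face is the bounding box). Everything else is bookkeeping, so I do not expect any real obstacle here.
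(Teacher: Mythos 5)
There is a genuine gap, and it sits exactly at the point you flagged as ``the only nontrivial point.'' Your argument needs the rectangles whose left (resp.\ top) side lies on the left (resp.\ top) edge of the bounding box to \emph{tile} that entire edge, and you propose to justify this from the definition of a rectangular arrangement by asserting that its outer face is the bounding box. But the paper's definition says the opposite: a rectangular arrangement is a $2$-connected subgraph of a layout whose bounded faces are rectangles, and ``the outer face need not be a rectangle.'' Indeed, in the one place this lemma is used (Lemma~\ref{lem:333}), the arrangements $A_1,\ldots,A_4$ are staircase-shaped regions cut out by monotone paths, so their bounding boxes are strictly larger than the arrangements themselves. For such a region (e.g., an L-shape of unit squares), the rectangles meeting the left edge of the bounding box may cover only part of it, so you only get $\sum_j \hght(r_{i_j}) \le H$ rather than equality --- and that inequality points the wrong way: it no longer yields an upper bound on $H$. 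The same failure occurs for the top edge in your lower-bound argument.

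The paper avoids this by never appealing to a tiled boundary edge. It takes $w$ to be the \emph{maximum} width of a rectangle in the arrangement; then $\wdth(R)\le mw$ (the projection of a connected arrangement onto the $x$-axis is an interval covered by at most $m$ intervals of length at most $w$), while $\hght(R)\ge \alpha m w$ because the bounding box must contain the rectangle of width $w$, whose height is $\alpha m w$. This gives $\hght(R)/\wdth(R)\ge \alpha$, and the symmetric argument with the maximum height gives the upper bound $\alpha m^2$. Your proof could be repaired by replacing the boundary-strip decomposition with this projection/extremal-rectangle argument, but as written the key tiling step is false for the arrangements the lemma is actually applied to.
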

\begin{proof}
Consider a rectangular arrangement $A$ with $m$ rectangles and a bounding box $R$.
Let $w$ be the maximum width of a rectangle in $A$. This implies $\wdth(R)\leq mw$.
Each rectangle of width $w$ in the arrangement has height $\alpha mw$, and so $\hght(R)\geq \alpha mw$. The aspect ratio of $R$ is $\hght(R)/\wdth(R)\geq (\alpha mw)/(mw)=\alpha$.

Similarly, let $h$ be the maximum height of a rectangle in $A$. Then $\hght(R)\leq mh$. Any rectangle of height $h$ in the arrangement $A$ has width $\frac{h}{\alpha m}$, and so $\wdth(R)\geq \frac{h}{\alpha m}$. The aspect ratio of $R$ is $\hght(R)/\wdth(R) \leq mh/(\frac{h}{\alpha m}) = \alpha m^2$, as claimed.
\end{proof}

We can now complete the characterization of aspect ratio universal layouts.

\begin{lemma}\label{lem:333}
If a layout \layout is not sliceable, it is not weakly ARU.
\end{lemma}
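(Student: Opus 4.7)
The plan is to adapt the windmill argument of Lemma~\ref{lem:prototypes} to a general non-sliceable layout by replacing each of the four ``arms'' of the prototype with an entire rectangular arrangement whose shape we control through Lemma~\ref{lem:nonsliceable-aspect-ratios}. Since $\mathcal{L}$ is non-sliceable, it contains a windmill with central rectangle $c$; extending each of its four arms past its inner endpoint yields horizontal or vertical rays $\ell_1,\ldots,\ell_4$ reaching the bounding box, and these rays partition the region of $\mathcal{L}$ outside $c$ into four quadrants $Q_1,\ldots,Q_4$, one per corner of the bounding box. Every non-central rectangle either lies in a single quadrant or is crossed by exactly one ray $\ell_i$.

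Let $m$ be the total number of rectangles of $\mathcal{L}$ and let $\alpha$ be a large parameter, to be tuned at the end. I assign $c$ aspect ratio $1$ and every other rectangle the common aspect ratio $\alpha m$, then suppose for contradiction that a weakly equivalent layout $\mathcal{L}'$ realizes this assignment. Applying Lemma~\ref{lem:deform} to each $\ell_i$, the rectangles traversed by $\ell_i$ in $\mathcal{L}$ correspond to rectangles in $\mathcal{L}'$ that are sliced along segments concatenating into an $x$- or $y$-monotone path $\pi_i$ from the image $c'$ of $c$ out to the bounding box of $\mathcal{L}'$. Together with $c'$, these four paths partition $\mathcal{L}'$ into $c'$ and four rectangular arrangements $A_1,\ldots,A_4$, each sitting in the corner of the bounding box that corresponds to $Q_i$; a rectangle originally straddling $\ell_i$ is placed on a unique side of $\pi_i$ in $\mathcal{L}'$ and so belongs to exactly one $A_i$.

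By Lemma~\ref{lem:nonsliceable-aspect-ratios}, the bounding box $B_i$ of each arrangement $A_i$ has aspect ratio at least $\alpha$, so for $\alpha$ chosen sufficiently large every $B_i$ is extremely elongated along one axis. Because weak equivalence preserves the planar contact structure of maximal segments, the arrangements $A_i$ sit around $c'$ in the same cyclic offset pattern as the arms of the original windmill around $c$: each $A_i$ borders $c'$ along one full side of $c'$ and, cyclically, extends past the adjacent corner of $c'$ into the region bordering $A_{i+1}$. Let $\sigma_i$ denote the relevant side of $B_i$---the one along which $A_i$ abuts $c'$ and $\pi_{i-1}$. The windmill offsets, combined with the extreme aspect ratios of the $B_i$, force $\sigma_{i-1} < \sigma_i$ for every $i \pmod 4$; the cyclic composition $\sigma_1 < \sigma_1$ is the desired contradiction, exactly mirroring the cyclic chain $s_1 < s_2 < s_3 < s_4 < s_1$ in the proof of Lemma~\ref{lem:prototypes}.

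The main obstacle will be the geometric bookkeeping of the second paragraph---verifying that the four monotone paths $\pi_i$ supplied independently by Lemma~\ref{lem:deform} are pairwise noncrossing and, together with $c'$, genuinely partition $\mathcal{L}'$ into $c'$ and four corner arrangements with the expected cyclic adjacency. A secondary issue is converting the aspect-ratio bounds of Lemma~\ref{lem:nonsliceable-aspect-ratios} into a quantitative version of the offset inequality $\sigma_{i-1}<\sigma_i$ that is robust to the $B_i$ not being exactly square; taking $\alpha$ polynomially larger than $m$ should be enough. Once these two points are settled, the prototype windmill argument transfers to the present setting and completes the proof.
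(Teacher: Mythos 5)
Your setup (windmill, rays, quadrants, Lemma~\ref{lem:deform}, Lemma~\ref{lem:nonsliceable-aspect-ratios}) matches the paper's, but your aspect ratio assignment and your final contradiction do not, and the substitute you propose has a genuine gap. You give \emph{every} non-central rectangle the same aspect ratio $\alpha m$, so all four bounding boxes $B_1,\dots,B_4$ are elongated in the \emph{same} direction, and you then try to close a cyclic chain $\sigma_1<\sigma_2<\sigma_3<\sigma_4<\sigma_1$ as in Lemma~\ref{lem:prototypes}. This fails for two reasons. First, the sides $\sigma_i$ alternate in orientation around the windmill (horizontal, vertical, horizontal, vertical), so chaining the inequalities forces you to convert between the width and the height of each $B_i$ via its aspect ratio; but Lemma~\ref{lem:nonsliceable-aspect-ratios} pins that aspect ratio only to the window $[\alpha,\alpha m^2]$, and when you compose the four steps the parameter $\alpha$ cancels and the accumulated slack is a fixed power of $m$ --- the chain yields something like $1<m^{8}$ rather than a contradiction, so no choice of $\alpha$ ``polynomially larger than $m$'' can repair it. Second, the strict side-containment that drives the chain in the five-rectangle prototype is simply not available here: adjacent arrangements $A_i$ and $A_{i+1}$ are separated by a monotone \emph{staircase} $\pi_i$, not by a straight segment, so their bounding boxes can overlap and neither box need have a side contained in a side of the other. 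This is exactly the ``geometric bookkeeping'' you defer to the end, and it is not bookkeeping --- it is where the argument breaks.

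The paper's proof avoids both problems by making the assignment non-uniform: rectangles in the opposite quadrants $Q_1,Q_3$ are made very tall (aspect ratio $6n$) and those in $Q_2,Q_4$ very wide (aspect ratio $(6n^2)^{-1}$), with rectangles straddling a ray given a composite aspect ratio ($6n+(6n^2)^{-1}$ or its analogue) precisely so that they can be split in $\mathcal{L}'$ into a tall piece and a wide piece; this splitting is also what produces the slicing segments required by the hypothesis of Lemma~\ref{lem:deform}, a step your uniform assignment does not supply. The contradiction is then obtained not from a cyclic chain but from the two diagonal unions $A_1\cup c\cup A_3$ and $A_2\cup c\cup A_4$: each contains two opposite corners of $R$, hence each has bounding box $R$, and the first forces the aspect ratio of $R$ to be at least $2$ while the second forces it to be at most $\tfrac12$. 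Note also that the paper first reduces to an irreducible layout (via Lemma~\ref{lem:sliceable-uniqueness}) so that the rays contain no other segments, which your sketch omits. To salvage your write-up, adopt the alternating tall/wide assignment and the diagonal-union contradiction; the cyclic chain does not survive the passage from single rectangles to arrangements.
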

\begin{proof}
We proceed by induction on the number $n$ of rectangles in \layout.
In the basis step, we assume that \layout is irreducible.
In particular, \layout contains no slices, as any slice would create two nontrivial sublayouts. Every nonsliceable layout contains a windmill. Consider an arbitrary windmill in \layout, assume without loss of generality that it is clockwise and let $c$ be its central rectangle (see Fig.~\ref{fig:dividing:1}). Note that $c$ is a rectangular face in \layout, since \layout is irreducible. Denote by $R$ the bounding box of \layout. By extending the arms of the windmill into rays, $\ell_1,\ldots ,\ell_4$, we subdivide $R\setminus c$ into four \emph{quadrants}, denoted by $Q_1,\ldots , Q_4$ in counterclockwise order starting with the top-right quadrant.

Note that the interior of any rectangular face of \layout intersects at most two rays. Indeed, any two points on two different rays, say $p_i\in \ell_i$ and $p_j\in \ell_j$, span a closed axis-parallel rectangle $a(p_i,p_j)$ that intersects (the closure of) the central rectangle $c$. Now suppose that both $p_i$ and $p_j$ are in the interior of some rectangular face $r\subset R\setminus c$ of \layout, then $a(p_i,p_j)$ also lies in the interior of  $r$, and so $c$ would intersect the interior of a rectangular face $r\subset R\setminus c$, which is a contradiction. It follows that every face of \layout in $R\setminus c$ lies either in one quadrant or in the union of two consecutive quadrants.

We define an aspect ratio assignment $\alpha$ as follows: Let $\alpha(c)=1$.
If $r\subseteq Q_1$ or $r\subseteq Q_3$, let $\alpha(r)=6n$; and if  $r\subseteq Q_2$ or $r\subseteq Q_4$, let $\alpha(r)=(6n^2)^{-1}$. For a rectangle $r$ split by a ray, we set $\alpha(r)=6n+(6n^2)^{-1}$ if $r$ is split by the horizontal ray $\ell_1$ or $\ell_3$; and $\alpha(r)=((6n)^{-1}+(6n^2))^{-1}$ if split by the vertical ray $\ell_2$ or $\ell_4$.

Suppose that some layout \layoutp weakly equivalent to \layout realizes $\alpha$.
Split every rectangle of aspect ratio $6n+(6n^2)^{-1}$ in \layoutp horizontally into two rectangles of aspect ratios $6n$ and $(6n^2)^{-1}$. Similarly, split every rectangle of aspect ratio $((6n)^{-1}+(6n^2))^{-1}$ vertically into two rectangles of aspect ratios $6n$ and $(6n^2)^{-1}$; see Fig.~\ref{fig:dividing:2}. By Lemma~\ref{lem:deform}, there are four $x$- or $y$-monotone paths $P_1,\ldots ,P_4$ from the four arms of the windwill to four distinct sides of the bounding box that pass through the splitting segments. The paths $P_1,\ldots ,P_4$ subdivide the exterior of the windmill into four rectangular arrangements that we denote by $A_1,\ldots ,A_4$, and that each contain a unique corner of the bounding box. By construction, every rectangle in $A_1$ and $A_3$ has aspect ratio $6n$, and every rectangle in $A_2$ and $A_4$ has aspect ratio $(6n^2)^{-1}$.

Let $R_1,\ldots ,R_4$ be the bounding boxes of $A_1,\ldots , A_4$, respectively.
By Lemma~\ref{lem:nonsliceable-aspect-ratios}, both $R_1$ and $R_3$ have aspect ratios at least $6$, and both $R_2$ and $R_4$ have aspect ratios at most $\frac16$.
By construction, the rectangular arrangements $A_1,\ldots ,A_4$ each contain an arm of the windmill.  This implies that $\wdth(c)< \min\{\wdth(R_1), \wdth(R_3)\}$ and $\hght(c)<\min\{ \hght(R_2),\hght(R_4)\}$.
Consider the rectangular arrangement comprised of $A_1$, $c$, and $A_3$. It contains two opposite corners of $R$, and so its bounding box is $R$. Furthermore,
$\hght(R)\geq  \max\{\hght(R_1),\hght(R_3)\}$, and
\begin{align*}
\wdth(R)
&\leq \wdth(R_1)+\wdth(c)+\wdth(R_3)\\
& < 3\, \max\{\wdth(R_1), \wdth(R_3)\}\\
&\leq 3\, \max\left\{\frac{\hght(R_1)}{6}, \frac{\hght(R_3)}{6}\right\}
=  \frac{\max\{ \hght(R_1),\hght(R_3)\}}{2},
\end{align*}
and so the aspect ratio of $R$ is at least $2$.
Similarly, the bounding box of the rectangular arrangement comprised of $A_2$, $c$, and $A_3$ is also $R$, and an analogous argument implies that its aspect ratio must be at most $\frac12$.
We have shown that the aspect ratio of $R$ is at least 2 and at most $\frac12$, a contradiction. Thus the aspect ratio assignment $\alpha$ is not realizable, and so \layout is not weakly aspect ratio universal.

\paragraph{Induction step.} Assume that \layout has a nontrivial sublayout.
We distinguish between two cases:

\noindent\textbf{Case 1: \layout has a nontrivial irreducible sublayout.}
Let \layoutstar be a minimum nontrivial irreducible sublayout sublayout of \layout.
By induction, \layoutstar admits an aspect ratio assignment $\alpha^*$ that is not weakly realizable. Augment $\alpha^*$ arbitrarily to an aspect ratio assignment $\alpha$ for \layout.
A weak realization of $\alpha$ on \layout would include a weak realization of $\alpha^*$ on \layoutstar, which is impossible, so \layout is not weakly ARU.

\noindent\textbf{Case 2: all nontrivial sublayouts of \layout are sliceable.}
Then we can replace each maximal sublayout \layoutstar of \layout with a rectangle $r^*$ to obtain an irreducible layout \layoutp. By induction, there is an aspect ratio assignment $\alpha'$ for \layoutp that is not realizable. By Lemma~\ref{lem:sliceable-uniqueness} there is a suitable aspect ratio assignment $\alpha^*$ to each sliceable sublayout \layoutstar such that the bounding box of its unique weak realization has the required aspect ratio $\alpha'(r^*)$.
Thus the combination of $\alpha'$ and the assignment $\alpha^*$ yields an aspect ratio assignment for \layout that is not weakly realizable.
\end{proof}

We are now ready to prove Theorems~\ref{thm:weak} and~\ref{thm:equivalence}. We restate both theorems for clarity.

\weaktheorem*
\begin{proof}
Let \layout be a generic layout. If \layout is sliceable, then it is weakly ARU by Corollary~\ref{cor:x12}. Otherwise, it is not weakly ARU by  Lemma~\ref{lem:333}.
\end{proof}
\strongtheorem*
\begin{proof}

Properties (ii) and (iii) are equivalent by Lemma~\ref{lem:flip}.
Property (ii) implies (i) by Corollary~\ref{cor:111}. For the converse, we prove the contrapositive. Let \layout be a layout that is not one-sided or not sliceable. If \layout is not sliceable, then it is not weakly ARU by Lemma~\ref{lem:333}, hence not strongly ARU, either. Otherwise \layout is sliceable but not one-sided, and then \layout is not strongly ARU by Lemma~\ref{lem:222}.
\end{proof}

\subsection{Unique Transversal Structure}
\label{ssec:transversal}

Subdividing a square into squares has fascinated humanity for ages~\cite{BSST40,HenleH08,Tutte58}. For example, a \emph{perfect square tiling} is a tiling with squares with distinct integer side lengths.
Schramm~\cite{Schramm-Squares}  (see also~\cite[Chap.~6]{Lovasz}) proved that every near-triangulation with an outer 4-cycle is the extended dual graph of a (possibly degenerate or nongeneric) subdivision of a rectangle into squares. The result generalizes to rectangular faces of arbitrary aspect ratios (rather than squares):

\begin{theorem}\label{thm:Schramm} (Schramm~\cite[Thm.~8.1]{Schramm-Squares})
Let $T=(V,E)$ be a near-triangulation with an outer 4-cycle, and $\alpha\colon V^*\rightarrow \mathbb{R}^+$ a function on the set $V^*$ of the inner vertices of $T$. Then there exists a unique (but possibly degenerate or nongeneric) layout \layout such that $G^*(\LL)=T$, and for every $v\in V^*$, the aspect ratio of the rectangle corresponding to $v$ is $\alpha(v)$.
\end{theorem}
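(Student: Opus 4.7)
The result is Schramm's, so I would follow the strategy of his original proof, with two parts (existence and uniqueness) plus careful handling of degeneracies. I would parametrize candidate layouts by nonnegative widths $w_v$ (equivalently heights $h_v = \alpha(v)\,w_v$) attached to the inner vertices. The combinatorial data $T$ imposes, for any valid layout with $G^*(\mathcal{L}) = T$, a finite set of linear consistency constraints: widths on the two sides of each internal vertical segment must match, heights on the two sides of each internal horizontal segment must match, and the outer contacts sum up to the dimensions of the bounding box. Different transversal structures supported by $T$ give rise to different polytopes of feasible solutions inside this parameter space, and degenerate (collapsed) layouts live on the shared boundaries of these polytopes.

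For \textbf{existence} I would normalize so that the total area equals $1$, which gives a compact parameter space, and then invoke a fixed-point or Perron-type argument. Schramm's way is to set up a discrete extremal-length / Dirichlet-energy functional on height functions defined on $T$, then use compactness and lower semicontinuity to produce a minimizer; the first-order conditions at the minimizer are precisely the layout consistency constraints. A more elementary alternative is to define a continuous self-map $\Phi$ on the space of normalized height assignments that attempts to place the rectangles greedily along $T$ and averages out mismatches, and then apply Brouwer's theorem. Both routes yield a (possibly degenerate, possibly nongeneric) layout whose extended dual is $T$ and whose rectangles have the prescribed aspect ratios.

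For \textbf{uniqueness} I would use a maximum-principle argument. Suppose $\mathcal{L}$ and $\mathcal{L}'$ both realize $T$ and $\alpha$, and set $\rho_v = w_v / w_v'$, which equals $h_v / h_v'$ since the aspect ratios agree. Let $v^*$ attain $\rho^* = \max_v \rho_v$. The width of the rectangle for $v^*$ in $\mathcal{L}$ equals the sum of widths of the rectangles stacked along its top side; dividing by $\rho^*$ and comparing with the analogous identity in $\mathcal{L}'$ forces each neighbor in that stack to also satisfy $\rho_w = \rho^*$. Iterating this adjacency propagation in all four directions reaches the boundary of the bounding box, which has the same dimensions in $\mathcal{L}$ and $\mathcal{L}'$ (after the global scaling implicit in normalization), so $\rho^* = 1$, giving $\mathcal{L} = \mathcal{L}'$.

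The \textbf{main obstacle} is handling degenerate rectangles ($w_v = 0$) correctly. Such a rectangle is not metrically adjacent to anything, so the maximum-principle propagation must be run on the combinatorial graph $T$ rather than on geometric adjacencies, and one must verify that the combinatorial adjacencies of a collapsed rectangle still carry enough information to pass $\rho^*$ to its $T$-neighbors. A related difficulty on the existence side is that the fixed point (or energy minimizer) may sit on a boundary shared by several transversal structures of $T$; one has to check that, regardless of which cell is entered, the outcome is a legitimate layout with $G^*(\mathcal{L}) = T$. Both issues are resolved by working from the start in the combinatorial moduli space associated with $T$ and only interpreting the solution geometrically at the end, as Schramm does.
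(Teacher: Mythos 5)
First, note that the paper does not prove this statement at all: Theorem~\ref{thm:Schramm} is imported verbatim from Schramm~\cite[Thm.~8.1]{Schramm-Squares}, and the authors use only its uniqueness clause (in Lemma~\ref{lem:gd21}). So there is no in-paper proof to compare against; you are effectively reconstructing Schramm's argument, and your sketch, while pointing at the right circle of ideas (extremal length, energy minimization, a maximum principle), has a gap exactly where the theorem is hard.

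The gap is in the uniqueness argument, which is the part the paper actually relies on. Your propagation step compares ``the sum of widths of the rectangles stacked along the top side of $r_{v^*}$'' in $\mathcal{L}$ with the analogous sum in $\mathcal{L}'$. This presupposes that the two layouts have the same combinatorial structure, i.e., the same transversal structure: the same set of rectangles sitting on top of $r_{v^*}$, separated by the same horizontal segment. But the whole content of the uniqueness claim, as used in Lemma~\ref{lem:gd21}, is that two layouts with \emph{different} transversal structures (different above/below versus left/right relations among the same contacts) but the same extended dual $T$ and the same aspect ratio function $\alpha$ cannot coexist. For such a pair, the ``stacks'' over a given rectangle are different sets of vertices in the two layouts, the linear consistency constraints live in different polyhedral cells of your parameter space, and the ratio $\rho_v = w_v/w_v'$ does not propagate along any single adjacency structure. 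Your sketch acknowledges that different transversal structures give different polytopes and that degenerate layouts sit on their shared boundaries, but ``working in the combinatorial moduli space and interpreting geometrically at the end'' is precisely the step that needs a real construction --- in Schramm's treatment this is done by setting up a single variational problem whose data is only $T$ and $\alpha$ (not a chosen transversal structure) and showing the minimizer is unique by convexity; the combinatorial layout is then read off from the minimizer rather than fixed in advance. The degenerate-rectangle issue you flag is real as well (collapsed rectangles break metric adjacency, and the paper's caveat about separating 3-cycles reflects exactly this), but it is secondary to the cross-transversal-structure uniqueness, which your maximum principle as stated does not deliver.
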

The caveat in Schramm's result is that all rectangles in the interior of every separating 3-cycle must degenerate to a point, and rectangles in the interior of some of the separating 4-cycles may also degenerate to a point. We only use the \emph{uniqueness} claim under the assumption that a nondegenerate and generic realization exists for a given aspect ratio assignment.

\begin{lemma}\label{lem:gd21}
If a layout \layout is strongly ARU,
then its extended dual graph $G^*(\LL)$ admits a unique transversal structure.
\end{lemma}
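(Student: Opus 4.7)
The plan is to prove the contrapositive via Schramm's uniqueness theorem (Theorem~\ref{thm:Schramm}). Suppose that the extended dual $G^*(\LL)$ admits two distinct transversal structures. Because transversal structures are in bijection with the strong equivalence classes of generic layouts having a given extended dual, one of these transversal structures corresponds to a generic layout $\LL'$ that is \emph{not} strongly equivalent to \layout but nonetheless satisfies $G^*(\LL')=G^*(\LL)$. The goal is to exhibit an aspect ratio assignment that rules out strong ARU of \layout.

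Next I would take the aspect ratio assignment $\alpha$ defined by reading off the actual aspect ratios of the rectangles of $\LL'$ (using the bijection between the vertices of $G^*(\LL')$ and $G^*(\LL)$). Assuming, for contradiction, that \layout is strongly ARU, there would then exist a layout $\LL''$ strongly equivalent to \layout that realizes $\alpha$. By strong equivalence, $G^*(\LL'')=G^*(\LL)=G^*(\LL')$, so both $\LL'$ and $\LL''$ are generic, nondegenerate realizations of the same aspect ratio data on the same near-triangulation. Theorem~\ref{thm:Schramm} then forces $\LL'=\LL''$ up to scaling and translation, yielding a contradiction because $\LL''$ is strongly equivalent to \layout while $\LL'$ is not.

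The main obstacle is justifying that Schramm's uniqueness conclusion is applicable in our situation, since Theorem~\ref{thm:Schramm} allows for degenerate or nongeneric realizations in the presence of separating 3- or 4-cycles. I would address this by noting that both $\LL'$ and $\LL''$ are \emph{a priori} generic, nondegenerate layouts (they come from genuine transversal structures on $G^*(\LL)$, which is a 4-connected triangulation without separating 3-cycles), so the caveat of Theorem~\ref{thm:Schramm} does not apply; the uniqueness clause then yields $\LL'=\LL''$ up to similarity, which is exactly what contradicts the assumption that $\LL'$ and \layout lie in different strong equivalence classes.

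A minor bookkeeping point is that strong equivalence is invariant under scaling and translation of the bounding box, so the ``up to scaling and translation'' ambiguity in the uniqueness statement causes no trouble in identifying the strong equivalence class of the unique realization.
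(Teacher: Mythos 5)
Your proposal is correct and follows essentially the same route as the paper: both derive an aspect ratio assignment from a strongly nonequivalent layout sharing the extended dual (guaranteed by a second transversal structure) and invoke the uniqueness clause of Theorem~\ref{thm:Schramm} to conclude that this assignment cannot be realized within the strong equivalence class of \layout. Your explicit justification that the degeneracy caveat of Schramm's theorem does not apply (since the competing realizations are a priori generic and $G^*(\LL)$ has no separating triangles) is a point the paper only gestures at with the parenthetical ``(nondegenerate),'' so your write-up is, if anything, slightly more careful.
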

\begin{proof}
Consider the extended dual graph $T=G^*(\LL)$ of a strongly ARU layout \layout. As noted above, $T$ is a 4-connected inner triangulation of a 4-cycle. If $T$ admits two different transversal structures, then there are two strongly nonequivalent layouts, \layout and \layoutp, such that $T=G^*(\LL)=G^*(\LL')$, which in turn yield two aspect ratio assignments, $\alpha$ and $\alpha'$, on the inner vertices of $T$. By Theorem~\ref{thm:Schramm}, the (nondegenerate) layouts \layout and \layoutp that realize $\alpha$ and $\alpha'$ are unique. Consequently, neither of them can be strongly aspect ratio universal.
\end{proof}

Lemma~\ref{lem:gd21} readily shows that Theorem~\ref{thm:equivalence}(i) implies Theorem~\ref{thm:equivalence}(iii), and provides an alternative proof for the geometric arguments in Lemmata~\ref{lem:222} and~\ref{lem:333}.

\section{Recognizing Dual Graphs of Aspect Ratio Universal Layouts}
\label{sec:recognition}



In this section, we describe an algorithm that, for a given graph $G$, either finds a one-sided and sliceable (hence strongly ARU) layout \layout whose dual graph is $G$, or reports that no such layout exists (Theorem~\ref{thm:algorithm}).

Assume that we are given a near-triangulation $G$ (that is, a plane graph where every bounded face is a triangle).  A slice of a layout corresponds to an edge cut in the dual graph that contains at most two edges of the outer face. If a slice is one-sided, the edges in the edge cut form a star; and the edge cut is determined by its edges on the boundary of the outer face. A brute force algorithm would guess a slice (i.e., edge cut), and recurse on the two subproblems; it would run in exponential time. We obtain a polynomial-time algorithm using a key insight: If we already know a slice, then in each subproblem, we know two corner rectangles. Our algorithm will utilize   partial information about the corner rectangles.

\paragraph{Problem Formulation.}
The input of our recursive algorithm will be an \emph{instance} $I=(G,C,P)$, where $G$ is a near-triangulation, $C\colon V(G)\rightarrow \mathbb{N}_0$ is a \emph{corner count}, and $P$ is a set of ordered pairs $(u,v)$ of vertices on the outer face of $G$. An instance $I=(G,C,P)$ is \emph{realizable} if there exists a one-sided sliceable layout \layout such that $G$ is the dual graph of \layout, every vertex $v\in V(G)$ corresponds to a rectangle in \layout incident to \emph{at least} $C(v)$ corners of the bounding box $R$ of \layout, and every pair $(a,b)\in P$ corresponds to a pair of rectangles in \layout incident to two counterclockwise \emph{consecutive} corners of $R$.
When we have no information about corners, then $C(v)=0$ for every $v\in V$, and $P=\emptyset$.
For convenience, we also maintain the total count $C(V)=\sum_{v\in V}C(v)$, and the set $K=\{v\in V(G):C(v)>0\}$ of vertices with positive corner count.

\subsection{Structural Properties of One-Sided Sliceable Layouts}
\label{ssec:structure}

Throughout this section, we use a default notation: If an instance $(G,C,P)$ is realizable by a one-sided sliceable layout \layout, then $R$ denotes the bounding box of \layout, and for every $v\in V(G)$, $r_v$ denotes the rectangular face of \layout corresponding to $v$. Note that any sublayout of a one-sided sliceable layout is also one-sided and sliceable (i.e., both properties are hereditary).

\begin{figure}[htbp]
	\centering
    \includegraphics[width=0.85\textwidth]{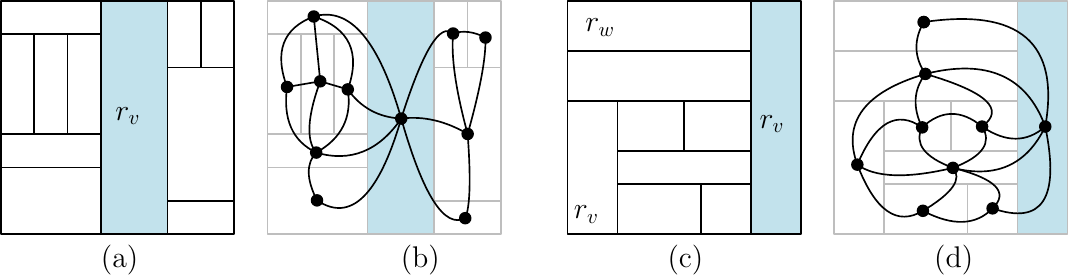}
	\caption{A one-sided sliceable layout and its dual graph $G$.
    (a--b) If $v$ is a cut vertex of $G$, then $r_v$ is bounded by two slices.
	(c--d) If there is no cut vertex in $G$, then some rectangular face $r_v$ is incident to two corners.}\label{fig:structural}
\end{figure}

\begin{lemma}
\label{lem:cut}
Assume that $(G,C,P)$ admits a realization \layout and $|V(G)|\geq 2$.
Then $G$ contains a vertex $v$ with one of the following (mutually exclusive) properties.
\begin{enumerate}
\item[(I)] Vertex $v$ is a cut vertex in $G$. Then $r_v$ is bounded by two parallel sides of $R$ and by two parallel slices; and $C(v)=0$; see Fig.~\ref{fig:structural}a--\ref{fig:structural}b).
\item[(II)] Rectangle $r_v$ is bounded by three sides of $R$ and by a slice; and $0\leq C(v)\leq 2$;
see Fig.~\ref{fig:structural}c--\ref{fig:structural}d.
\end{enumerate}
\end{lemma}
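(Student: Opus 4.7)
The plan is to locate $v$ by analyzing the root slice of a BSP-tree of \layout together with the one-sidedness of \layout. Since $|V(G)|\geq 2$, any BSP-tree of \layout has a root slice $\ell$ subdividing $R$ into two subrectangles; assume w.l.o.g.\ $\ell$ is horizontal, so it runs from the left side of $R$ to the right side. As a maximal horizontal segment of \layout, one-sidedness forces $\ell$ to equal a side of some rectangle $r\in\LL$; and since $\ell$ spans the full width of $R$, so does $r$. Thus the left and right sides of $r$ coincide with those of $R$, and w.l.o.g.\ $\ell$ is the \emph{bottom} side of $r$. Let $v$ be the vertex of $G$ corresponding to~$r$.

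I then split on the position of the top side of $r$. If the top side of $r$ lies on the top edge of $R$, then $r$ is bounded by three sides of $R$ and the single slice $\ell$, which is property~(II); moreover $r$ is incident to exactly the top-left and top-right corners of $R$, so $C(v)\leq 2$. This case is precisely the one from the proofs of Corollaries~\ref{cor:111}--\ref{cor:x12}, where a corner strip is peeled off in the sliceable recursion.

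Otherwise, the top side of $r$ is an interior horizontal segment $\ell'$. Since $r$ spans the full width of $R$, so does $\ell'$; and $\ell'$ subdivides the rectangular union of faces lying strictly above $\ell$ into $r$ and the part above $\ell'$, so by definition $\ell'$ is a slice. Hence $r$ is bounded by the two parallel sides of $R$ (left and right) and the two parallel horizontal slices $\ell$ and $\ell'$, which is the structural part of property~(I). The four corners of $r$ lie on the left/right sides of $R$ strictly between the top and bottom edges, so $r$ touches no corner of $R$ and therefore $C(v)=0$. Finally, $v$ is a cut vertex of $G$ because $r$ geometrically separates $R\setminus r$ into two disjoint open strips (one below $\ell$, one above $\ell'$), both nonempty; so any rectangle below $\ell$ is non-adjacent to any rectangle above $\ell'$, making $G\setminus\{v\}$ disconnected.

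The main delicate point is verifying the cut-vertex property in the second case, which reduces to the topological observation that $r$ fully spans $R$ in one direction and therefore decouples the two opposite strips in the adjacency graph. The mutual exclusivity of~(I) and~(II) is then immediate, since~(I) forces $r$ to have exactly two of its four sides on $R$ while~(II) forces three.
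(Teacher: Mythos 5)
Your argument is correct as a proof of the existence claim, and it takes a genuinely different route from the paper's. The paper splits on whether $G$ has a cut vertex: if $v$ is \emph{any} cut vertex, it argues that $r_v$ must meet $\partial R$ in two disjoint arcs, which for two nested axis-parallel rectangles forces $r_v$ to be a full strip bounded by two parallel sides of $R$ and two parallel slices (property (I)); if $G$ has no cut vertex, it takes a slice spanning $R$, uses one-sidedness to obtain the rectangle $r_v$ having that slice as a side, and rules out the ``interior strip'' configuration precisely because it would make $v$ a cut vertex, leaving property (II). You instead start from the root slice unconditionally and convert the paper's contradiction branch into the positive outcome (I). This is arguably cleaner: it avoids the boundary-intersection argument entirely, and your verification that the interior-strip rectangle really is a cut vertex (both residual strips nonempty and mutually non-adjacent) is more explicit than the paper's one-line assertion.

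However, there is a scope issue you should be aware of. Your proof exhibits \emph{one} vertex $v$ that is either a pivot-type vertex or a cut vertex with the strip structure; it does not show that \emph{every} cut vertex of $G$ has $r_v$ bounded by two parallel sides of $R$ and two parallel slices with $C(v)=0$. The ``Then'' in item (I) is most naturally read as that universal implication, the paper's first paragraph proves exactly that (starting from an arbitrary cut vertex), and the universal form is what is actually used downstream: Lemma~\ref{lem:split} and the cut-vertex step of the algorithm apply \textsc{Split} to an \emph{arbitrary} cut vertex and rely on $r_v$ being such a strip so that deleting it leaves two one-sided sliceable sublayouts. If the lemma is read strictly as an existence statement your proof suffices, but to support its later use you would need to add the short argument that any cut vertex forces $r_v$ to intersect $\partial R$ in two disjoint arcs, hence to span between two opposite sides of $R$.
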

\begin{proof}
Let $v$ be a cut vertex of $G$. Then $r_v$ intersects the boundary of $R$ in at least two disjoint arcs. Since both $r_v$ and $R$ are axis-parallel rectangles and $r_v\subset R$, their boundaries can intersect in at most two disjoint arcs, which are two parallel sides of $r_v$. The other two parallel sides of $r_v$ form slices. In particular, $r_v$ is bounded by two parallel sides of $R$ and two slices, and so it is not incident to any corner of $R$. In this case, $v$ has property (I).

Assume that $G$ does not have cut vertices. Since \layout is sliceable, it is subdivided by a slice $s$ which is a segment between two opposite sides of $R$. Since \layout is one-sided, $s$ must be the side of a rectangle $r_v$ for some $v\in V(G)$. If both sides of $r_v$ parallel to $s$ are in the interior of $R$, then $r_v$ is bounded by two sides of $R$ and by two slices. Since the sublayouts of \layout on the opposite sides of these slices are disjoint, then $v$ is a cut vertex in $G$, contrarily to our assumption. Consequently, the other side of $r_v$ parallel to $s$ must be a side of $R$. Then $r_v$ is bounded by three sides of $R$ and by $s$. Clearly, $r_v$ is incident to precisely two corners of $R$, and so $v$ has property (II).
\end{proof}

Based on property (II), we say that a vertex $v$ of $G$ is a \emph{pivot} if there exists a one-sided sliceable layout \layout with $G\simeq G(\LL)$ in which $r_v$ is bounded by three sides of $R$ and a slice.
If we find a cut vertex or a pivot $v$ in $G$, then at least one side of $r_v$ is a slice, so we can remove $v$ and recurse on the connected components of $G-v$.

\paragraph{Recursive calls.}
Given an instance $I=(G,C,P)$, we define the subproblems created by removing the vertex $v$ in both cases (where $v$ is either a cut vertex or a pivot):

\textbf{(I)} For a cut vertex $v$ of $G$,
we define the operation \textsc{Split}$(G,C,P;v)$ as follows.
The graph $G-v$ must have precisely two components, $G_1$ and $G_2$. Let $(u_1,\ldots , w_1)$ and $(u_2,\ldots , w_2)$ be the sequence of neighbors of $v$ in $G_1$ and $G_2$, resp., in clockwise order. Initialize $C_1$ and $C_2$ as the restriction of $C$ to $V(G_1)$ and $V(G_2)$, respectively. Set $C_i(u_i)\leftarrow C_i(u_i)+1$ and $C_i(w_i)\leftarrow C_i(w_i)+1$ for $i\in \{1,2\}$ (if $u_i=w_i$, we increment $C_i(u_i)$ by 2). For each pair
$(a,b)\in P$, if both  $a$ and $b$ are in $V(G_i)$ for some
$i\in \{1,2\}$, then add $(a,b)$ to $P_i$. Otherwise we may assume, without loss of generality, that $a\in V(G_1)$ and $b\in V(G_2)$, and the counterclockwise path $(a,b)$ contains either $u_1,v,w_2$ or $w_1,v,u_2$. The removal of $v$ splits the path into two subpaths, that we add into $P_1$ and $P_2$, accordingly. Finally we add $(u_1,w_1)$ to $P_1$ and $(u_2,w_2)$ to $P_2$. Return the instances $(G_1,C_1,P_1)$ and $(G_2,C_2,P_2)$.

\begin{lemma}
\label{lem:split}
Let $v$ be a cut vertex of $G$. An instance $(G,C,P)$ is realizable if and only if  both instances returned by \textsc{Split}$(G,C,P;v)$ are realizable.
\end{lemma}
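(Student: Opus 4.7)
The plan is to prove the two directions of the equivalence separately, using Lemma~\ref{lem:cut}~(I) for the forward implication and a direct geometric gluing construction for the backward implication.

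For the forward direction ($\Rightarrow$), suppose $(G,C,P)$ is realized by a one-sided sliceable layout \layout. Since $v$ is a cut vertex of $G$, Lemma~\ref{lem:cut}~(I) forces $r_v$ to span two parallel sides of $R$ (say top and bottom) between two parallel slices $s_1$ and $s_2$, with $C(v)=0$. The slices carve $R\setminus r_v$ into two sublayouts $\LL_1$ and $\LL_2$, each one-sided and sliceable (both properties are hereditary). By construction $G(\LL_i)\simeq G_i$, the two components of $G-v$. The outer corners of $\LL_i$ consist of two corners of $R$ (hence of \layout) plus two new corners where $s_i$ meets the top and bottom sides of $R$; these new corners are incident, respectively, to the first and last rectangles of $\LL_i$ along $s_i$, namely $r_{u_i}$ and $r_{w_i}$ (coinciding when $v$ has a unique neighbor in $G_i$). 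This justifies the corner increments defining $C_i$.

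Next I would handle the pair set. A pair $(a,b)\in P$ with both entries in $V(G_i)$ remains a pair of ccw consecutive corners of $\LL_i$. A pair with $a\in V(G_1)$ and $b\in V(G_2)$ corresponds to ccw consecutive corners of $R$ whose ccw arc on $\partial R$ passes through $v$; cutting this arc at $v$ produces one subarc ending at $u_1$ or $w_1$ and another starting at $u_2$ or $w_2$, yielding exactly the two pairs that the split inserts into $P_1$ and $P_2$. Finally, because the two new corners on each slice $s_i$ are ccw consecutive on $\partial \LL_i$, the pair $(u_i,w_i)$ added by the split completes $P_i$ correctly. Conversely, given realizations $\LL_1,\LL_2$ of the two subinstances, I would independently scale them to a common height, then paste them together with $\LL_1$ on the left, $\LL_2$ on the right, and a thin vertical rectangle $r_v$ of the same height between them. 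The right boundary of $\LL_1$ and the left boundary of $\LL_2$ become the two vertical sides of $r_v$, giving two new maximal segments $s_1,s_2$ of the combined layout \layout. Then \layout is one-sided (each $s_i$ bounds $r_v$) and sliceable ($s_1$ is a slice separating two sliceable sublayouts). The dual graph is exactly $G$: adjacencies inside each $\LL_i$ are preserved, and the rectangles of $\LL_i$ incident to $s_i$ are precisely the neighbors of $v$ in $G_i$. Reversing the corner-count and pair-splitting arguments of the forward direction shows that $C$ and $P$ are satisfied in \layout.

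I expect the main obstacle to be the bookkeeping for the pair constraints: one must verify that the ccw consecutive-corner relation on $\partial R$ decomposes into the corresponding relations on $\partial \LL_1$ and $\partial \LL_2$ precisely via the cut at $v$, and that this decomposition is reversible, while also handling the degenerate case $u_i=w_i$ uniformly. The underlying reason this works is the conclusion of Lemma~\ref{lem:cut}~(I): $r_v$ spans $R$ between two parallel sides, so $\partial R\setminus r_v$ consists of exactly two arcs, one in each $\LL_i$, making the cyclic order of corners of $R$ a clean concatenation of the cyclic orders on $\LL_1$ and $\LL_2$ via the two arms of $r_v$.
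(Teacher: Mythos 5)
Your proposal is correct and follows essentially the same route as the paper's proof: in the forward direction you remove $r_v$ (whose position is pinned down by Lemma~\ref{lem:cut}~(I)) and observe that the two resulting sublayouts realize the two subinstances, and in the backward direction you glue the two realizations to opposite sides of a new rectangle $r_v$. The only difference is that you spell out the corner-count and pair bookkeeping that the paper dismisses as ``easily checked,'' which is a welcome addition rather than a deviation.
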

\begin{proof}
First assume that \layout is a realization of instance $(G,C,P)$. The removal of rectangle $r_v$ splits \layout into two one-sided and sliceable sublayouts, $\mathcal{L}_1$ and $\mathcal{L}_2$. It is easily checked that they realize $(G_1,C_1,P_1)$ and $(G_2,C_2,P_2)$, respectively.

Conversely, if both $(G_1,C_1,P_1)$ and $(G_2,C_2,P_2)$ are realizable, then they are realized by some one-sided sliceable layouts  $\mathcal{L}_1$ and $\mathcal{L}_2$, respectively. The union of a square $r_v$ and scaled copies of $\mathcal{L}_1$ and $\mathcal{L}_2$ attached to two opposite sides $r_v$ yields a one-sided sliceable layout $\mathcal{L}$ that realizes $(G,C,P)$.
\end{proof}

\noindent\textbf{(II)}
Let $v$ be a pivot of $G$. 
We define the operation \textsc{Remove}$(G,C,P;v)$ as follows.
Since $v$ is not a cut vertex, then $G-v$ has precisely one component, which we denote by  $G'$. Let $(u,\ldots , w)$ be the sequence of neighbors of $v$ in $G'$ in clockwise order. Initialize $C'$ as the restriction of $C$ to $V(G')$, and then set $C'(u)\leftarrow C'(u)+1$ and $C'(w)\leftarrow C'(w)+1$.
If for any pair $(a,b)\in P$, the counterclockwise path from $a$ to $b$ in $G$ contains $v$, then return \textsc{False}. Otherwise, set $P'=P$, and add $(u,w)$ to $P'$. Return the instance $(G',C',P')$.

\begin{lemma}
\label{lem:remove}
Let $v$ be a vertex of the outer face of $G$, but not a cut vertex.
An instance $(G,C,P)$ is realizable with pivot $v$ if and only if
the instance returned by \textsc{Remove}$(G,C,P;v)$ is realizable.
\end{lemma}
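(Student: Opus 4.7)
The plan is to prove the biconditional by an explicit correspondence between realizations of the two instances, implemented by deleting or re-attaching the pivot rectangle $r_v$ across its bounding slice. WLOG I will assume the slice $s$ bordering $r_v$ is horizontal and that $r_v$ sits at the top of the bounding box $R$, so $r_v$ is incident exactly to the top-left and top-right corners of $R$, and the cw-ordered neighbors $u,\ldots,w$ of $v$ are the rectangles along $s$ from right to left.

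For the forward direction, I start with a realization $\LL$ of $(G,C,P)$ in which $v$ is a pivot and delete $r_v$; the resulting sublayout $\LL'$ has bounding box $R'$ equal to the portion of $R$ below $s$. I then verify in turn that $\LL'$ inherits one-sidedness and sliceability (both properties are hierarchical); that its dual graph is $G-v=G'$, which is connected because $v$ is not a cut vertex; that the two new corners of $R'$ at the endpoints of $s$ are occupied by $r_u$ and $r_w$, so the increments $C'(u)=C(u)+1$ and $C'(w)=C(w)+1$ are realized; that the pair $(u,w)$ added to $P'$ matches the new ccw-consecutive corner pair of $R'$; and that all other corner counts and ccw-consecutive corner pairs of $R$ are inherited unchanged by $R'$. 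Finally I must argue that the \textsc{False}-return of \textsc{Remove} does not fire: a pivot rectangle is incident to $\partial R$ only along three consecutive sides, so $v$ cannot lie strictly on the outer-face ccw path between two other corner-pair rectangles of $\LL$.

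For the reverse direction, I take a one-sided sliceable realization $\LL'$ of the subinstance and glue a fresh rectangle $r_v$ across the top side of $R'$, extending the bounding box vertically to $R$. Adding a single slice to a sliceable layout keeps it sliceable, and the new slice $s$ is a side of $r_v$, so $\LL$ is also one-sided. Re-inserting $v$ into the dual graph with edges to the rectangles along $s$ — which by construction are its original neighbors in $G$ — recovers $G$. The two ``extra'' corner incidences credited to $r_u$ and $r_w$ in $C'$ are now absorbed because those corners of $R'$ become interior to $R$ along $s$ and are replaced by the top corners of $R$, both occupied by $r_v$. The pair $(u,w)\in P'$ is realized by $r_v$'s attachment at those two corners, and all pairs inherited from $P$ remain realized at corresponding corners of $R$.

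The main obstacle will be the bookkeeping of the pair set $P$: showing that the ccw-consecutive corner-pair relations in $R$ and $R'$ match exactly as prescribed by the algorithm, and that the \textsc{False}-clause rejects precisely those instances for which no realization with $v$ as pivot exists. The geometric invariant that underlies this correspondence — a pivot rectangle touches $\partial R$ along exactly three consecutive sides and nowhere else — reconciles the combinatorial check on ccw paths with the geometric constraint, and isolating this invariant up front should keep the case analysis on pairs in $P$ short.
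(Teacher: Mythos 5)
Your proposal takes essentially the same approach as the paper: the forward direction deletes the pivot rectangle $r_v$ and checks that the resulting sublayout realizes \textsc{Remove}$(G,C,P;v)$, and the reverse direction attaches a rectangle $r_v$ along the side of the bounding box between the two consecutive corners occupied by $r_u$ and $r_w$. The paper's own proof is just a terser version of this (it dismisses the corner/pair bookkeeping and the \textsc{False}-clause with ``it is easily checked''), so your more explicit verification plan is fine as written.
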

\begin{proof}
Assume that $(G,C,P)$ is realized by a one-sided scliceable layout \layout, and $v$ has property (II). The removal of rectangle $r_v$ from \layout creates a one-sided sliceable sublayout $\mathcal{L}'$. It is easily checked that $\mathcal{L}'$ realizes the instance $(G',C',P')$
 returned by \textsc{Remove}$(G,C,P;v)$.

Conversely, assume that the instance $(G',C',P')$ returned by \textsc{Remove}$(G,C,P;v)$ is realized by a layout $\mathcal{L}'$.
Then let $s_{uw}$ be the maximum segment that connects the two corners of
the bounding box of $\mathcal{L}'$ incident to the $r_u$ and $r_w$, where $(u,w)\in P'\setminus P$. We can attach a single rectangular face $r_v$ to the bounding box of $\mathcal{L}'$ along segment $s_{uw}$ and obtain a layout $\mathcal{L}$ that realizes $(G,C,P)$.
\end{proof}

\paragraph{Finding a Pivot.}
In the absence of a cut vertex, any vertex incident to the outer face of $G$ might be a pivot. We use partial information on the corners to narrow down the search for a pivot.

\begin{lemma} \label{lem:two-corner-removable}
Assume that an instance $(G,C,P)$ admits a realization \layout. If $|V(G)|\geq 2$ and $C(v)\geq 2$ for some vertex $v\in V(G)$, then $v$ is a pivot.
\end{lemma}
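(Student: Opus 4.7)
The plan is to show that the given realization \layout already exhibits $r_v$ with property (II) of Lemma~\ref{lem:cut}, so that $v$ qualifies as a pivot by definition. Since \layout realizes the instance $(G,C,P)$, the rectangle $r_v$ is incident to at least $C(v)\geq 2$ corners of the bounding box $R$. I would reduce the lemma to a short case analysis on how $r_v$ can share corners with $R$.

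First I would rule out the degenerate configurations. An axis-aligned rectangle $r_v\subseteq R$ touches a corner of $R$ only by having the corresponding pair of sides coincide. Consequently, if $r_v$ shared two \emph{opposite} corners with $R$, all four coordinates (left, right, top, bottom) of $r_v$ and $R$ would agree, giving $r_v=R$; this is impossible because $|V(G)|\geq 2$ forces another rectangle to exist inside \layout. The same determination rules out three shared corners, since three corners of an axis-aligned rectangle already fix the fourth. Therefore $r_v$ touches exactly two corners of $R$, and they are \emph{adjacent}.

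From here the geometry is forced. Up to symmetry, suppose the two shared corners are the top-left and top-right of $R$. Then $r_v$ has top side equal to the top side of $R$, and its left and right sides lie on the left and right sides of $R$, respectively. The fourth (bottom) side of $r_v$ is a horizontal segment running from the left side of $R$ all the way to the right side, which is a slice of \layout by definition. Thus $r_v$ is bounded by three sides of $R$ together with a single slice, and by the definition immediately preceding the lemma, $v$ is a pivot.

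There is no substantive obstacle: the argument is a direct unpacking of what ``incident to two corners'' means for an axis-aligned subrectangle of $R$. The only points to be careful about are (i) invoking $|V(G)|\geq 2$ to exclude $r_v=R$, and (ii) noting that because $r_v$ spans the full width (or height) of $R$, its remaining side is automatically a slice rather than an edge between some pair of interior rectangles.
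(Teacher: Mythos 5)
Your proposal is correct and follows essentially the same route as the paper's own proof: rule out two opposite corners via the contradiction $r_v=R$ with $|V(G)|\geq 2$, deduce that $r_v$ spans a full side of $R$ between two adjacent corners, and observe that the opposite parallel side of $r_v$ is a maximal segment joining two opposite sides of $R$ and hence a slice. The extra case of three shared corners that you dispose of is subsumed in the same argument and does not change the substance.
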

\begin{proof}
The rectangle $r_v$ is incident to at least two corners of $R$. If $r_v$ is incident to two opposite corners of $R$, then $r_v=R$, contradicting the assumption that $G$ has two or more vertices. Hence $r_v$ is incident to two consecutive corners of $R$, and so it contains some side $s$ of $R$. The other side of $r_v$ parallel to $s$ is a maximal segment between two opposite sides of $R$, so it must be a slice.
\end{proof}

In the absence of a cut vertex, we are looking for a pivot in a 2-vertex-cut (i.e., 2-cut). Recall that a sliceable layout \layout corresponds to a BSP-tree in which every nonleaf node corresponds to a sublayout \layoutp of \layout together with a slice of \layoutp.

\begin{lemma}\label{lem:2cut}
Let \layout be a one-sided sliceable layout whose dual graph $G(\LL)$ is 2-connected and has a 2-cut $\{u,v\}$. Then there exists a one-sided sliceable layout \layoutp  such that $G(\LL')\simeq G(\LL)$ and the root slice separates the rectangles corresponding to $u$ and $v$.
In particular, $u$ or $v$ is a pivot.

Furthermore, if there are two rectangular faces in \layout that are each incident to exactly one  corner of \layout, then the corresponding rectangular faces in \layoutp are also incident to some corners in \layoutp, or there exists a one-sided sliceable layout $\LL''$ with $G(\LL'')\simeq G(\LL)$ in which one of these rectangular faces is a pivot.
\end{lemma}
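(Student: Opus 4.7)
The plan is to decompose $G$ along the 2-cut $\{u,v\}$, re-realize each half with $r_u$ (or $r_v$) as a pivot, and glue the realizations so that the outermost slice of \layoutp separates $r_u$ from $r_v$.

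Since $\{u,v\}$ is a 2-cut of the 2-connected graph $G(\LL)$, the graph $G-\{u,v\}$ has at least two components; I would group them into nonempty sets $A$ and $B$, giving $G_A = G[A\cup\{u,v\}]$ and $G_B = G[B\cup\{u,v\}]$. In \layout the rectangles for $A\cup\{u,v\}$ occupy a sub-region bounded by $r_u$, $r_v$, and parts of $\partial R$; this sub-region is itself a one-sided sliceable layout $\LL_A$ with dual $G_A$ by heredity, and symmetrically $\LL_B$ has dual $G_B$.

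Next I would re-realize $G_A$ and $G_B$ as one-sided sliceable layouts in which $r_u$ appears as a pivot: a full-side strip with three sides on the bounding box and the fourth side a slice separating $r_u$ from the rest. Existence of such a re-realization can be argued recursively: if a sub-dual contains a further 2-cut involving $u$, invoke the present lemma inductively; otherwise Lemma~\ref{lem:cut} places $r_u$ as a pivot at a suitable corner. Matching heights across the two sub-realizations is arranged via aspect-ratio choice through Corollary~\ref{cor:111}. Gluing the two re-realizations along the $r_u$-strip identifies the two copies of $r_u$ into a single pivot of \layoutp, whose outermost slice is the opposite side of this strip and therefore separates $r_u$ from $r_v$. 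The resulting \layoutp is one-sided (the outermost slice is a full side of $r_u$) and sliceable (by the recursive construction), and $r_u$ is a pivot, as required.

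For the ``furthermore'' part I would track the two corner rectangles $r_p, r_q$ of \layout through the decomposition: each sits in $\LL_A$ or $\LL_B$ at the corresponding corner of $R$. If both $r_p, r_q$ still meet some corner of \layoutp after re-realization, the claim holds directly; otherwise a displaced rectangle, say $r_p$, must have been forced against the gluing seam in a spanning position, allowing us to redesignate it as the pivot vertex to obtain an alternative layout $\LL''$ in which $r_p$ is a pivot. The main obstacle is the recursive re-realization step: establishing that $G_A, G_B$ each admit a one-sided sliceable realization with $r_u$ as a pivot and with heights matchable across the two sub-realizations. This will require careful induction on the number of rectangles combined with the aspect-ratio freedom supplied by Corollary~\ref{cor:111}.
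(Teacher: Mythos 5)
Your decomposition along the 2-cut $\{u,v\}$ does not get off the ground. First, the rectangles corresponding to $A\cup\{u,v\}$ need not occupy a rectangular region of $R$ (their union can be an L-shape or a staircase, as in Fig.~\ref{fig:rk2}), so there is no sublayout $\LL_A$ with dual $G_A$ to which heredity applies. Second, and more fundamentally, the vertex $v$ belongs to \emph{both} pieces $G_A$ and $G_B$, yet the glued layout contains a single rectangle $r_v$ that must lie on one side of the $r_u$-strip; the adjacencies of $v$ to the vertices of the other piece are then lost, so the glued layout does not have dual graph $G$. Moreover, identifying the two copies of the $r_u$-strip places $r_u$ between the two halves, bounded by two parallel sides of $R$ and two slices --- that is configuration (I) of Lemma~\ref{lem:cut}, which would make $u$ a cut vertex of the dual, contradicting 2-connectivity; it is not the pivot configuration, and the first slice would then have $r_u$ and $r_v$ on the same side. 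Finally, the recursive re-realization step is circular: you need a realization of $G_A$ in which the \emph{designated} vertex $u$ is a pivot, but the lemma you would invoke inductively only yields that $u$ \emph{or} $v$ is a pivot, and Lemma~\ref{lem:cut} only asserts that \emph{some} pivot exists, not that $u$ can be made one. The appeal to Corollary~\ref{cor:111} for height matching is beside the point, since the claim concerns only the dual graph.

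The paper's proof avoids all of this by operating on the given layout \layout directly: it takes the segment $s_0$ separating $r_u$ from $r_v$ and either performs a local contact reversal (when $r_u$ and $r_v$ touch opposite sides of $R$), or builds an $x$- and $y$-monotone staircase of segments $s_0,s_1,\ldots,s_k$ starting at $s_0$, shows that each $s_i$ must reach $\partial R$ (otherwise a winding path would create a windmill, contradicting sliceability), and reassembles the resulting regions $r_v, A_0,\ldots,A_k$ into a new layout \layoutp with the same dual graph in which $r_v$ is a pivot; the corner-tracking for the ``furthermore'' clause is then done explicitly on this construction. If you want to keep a decomposition flavor, the cut to decompose along is the geometric segment $s_0$ (which keeps $u$ and $v$ on opposite sides), not the vertex pair $\{u,v\}$.
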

\begin{proof}
Let $R$ be the bounding box of \layout; let $r_u$ and $r_v$ denote the rectangular faces corresponding to $u$ and $v$, respectively. Since $G$ is a near-triangulation, then $u$ and $v$ are adjacent in $V(G)$.
Let $s_0$ be the maximal segment that separates $r_u$ and $r_v$.
If $s_0$ connects two opposite sides of $R$, the proof is complete with $\LL'=\LL$, so we may assume otherwise. Because \layout is one-sided, whenever two rectangular faces are in contact, a side of one rectangle fully contains a side of the other. We distinguish between two cases:

\textbf{Case 1: $r_u$ and $r_v$ are in contact with opposite sides of $R$.}
We may assume, without loss of generality, that $r_u$ and $r_v$ are in contact with the bottom and top side of $R$, resp., and the bottom side of $r_v$ contains the top side of $r_u$; see Fig.~\ref{fig:opposite1}.
Since \layout is one-sided, $s_0$ is a side of some rectangular face in \layout, and we may assume that $s_0$ is the bottom side of $r_v$. The left (resp., right) side of $r_v$ lies either along $\partial R$ or in a vertical segments $s_1$ (resp., $s_2$). Since $s_0$ does not reach both left and right sides of $R$ by assumption, at least one of $s_1$ and $s_2$ exists.
Assume, without loss of generality, that $s_1$ exists. Since the bottom-left corner of $r_v$ is the endpoint of $s_0$, it lies in the interior of segment $s_1$.

We claim that the bottom endpoint of $s_1$ is on the bottom side of $R$. Suppose otherwise. Then we can incrementally construct a \emph{counterclockwise winding path} $W$: it starts with a downward edge along $s_1$. Append an edge to $W$ by making a left turn and following a maximal segment to its endpoint, until the endpoint of $W$ is on the boundary of $R$ or in the interior of a previous edge of $W$. As $W$ can cross neither $r_u$ nor $r_v$, and it cannot make a right turn if it reaches $s_0$, then it cannot reach the boundary of $R$. Consequently, $W$ ends in the interior of one of its previous edges, and the maximal segments generated by its last four edges form a counterclockwise windmill, contradicting the assumption that \layout is sliceable. This completes the proof of the claim.

\begin{figure}[htbp]	
	\centering
	\begin{subfigure}[t]{0.45\textwidth}
		\centering
		\includegraphics{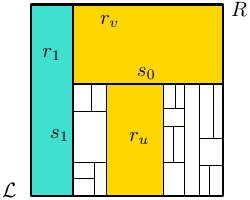}
		\caption{\label{fig:opposite1}}
	\end{subfigure}
	\hspace{1cm}
    \begin{subfigure}[t]{0.45\textwidth}
    	\centering
		\includegraphics{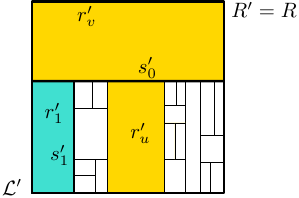}
		\caption{\label{fig:opposite2}}
	\end{subfigure}
		\caption{(a) A one-sided sliceable layout $\LL$ where $r_u$ and $r_v$ touch two opposite sides of the bounding box $R$. (b) The modified layout $\LL'$ within the same bounding box $R'=R$. \label{fig:opposite}}
\end{figure}

We can now modify \layout by extending $s_0$ and $r_v$ horizontally to the right side of $R$, and clip both $s_1$ and $r_1$ to $s_0$, as in Fig.~\ref{fig:opposite2}. This modification changes the contacts between $r_v$ and $r_1$ from vertical to horizontal, but does not change any other contacts in the layout, so it does not change the dual graph $G(\LL)$.
If segment $s_2$ exists, we can similarly extend $s_0$ to the right side of $R$ and clip $s_2$. We obtain a sliceable layout \layoutp with $G(\LL')\simeq G(\LL)$ in which the root slice $s_0'$ separates $r_u'$ and $r_v'$, and rectangle $r_v'$ is a pivot. Furthermore, every rectangle incident to a corner in \layout remains incident to some corner in \layoutp.

\textbf{Case 2: $r_u$ and $r_v$ are not in contact with opposite sides of $R$.} Then $r_u$ and $r_v$ are each in contact with a single side of $R$, and these sides are adjacent. We may assume, without loss of generality, that $r_u$ is in contact with the bottom side of $R$, $r_v$ is in contact with the left side of $R$, and the bottom side of $r_u$ contains the top side of $r_v$; refer to Fig.~\ref{fig:rk2}. Because \layout is one-sided and $s_0$ is not the root slice, then $s_0$ equals the bottom side of $r_v$.

\begin{figure}[htbp]	
	\centering
	\begin{subfigure}[t]{0.45\textwidth}
		\includegraphics{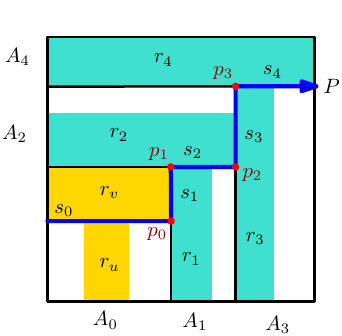}
		\caption{$\{u,v\}$ is a 2-cut of $G(\LL)$. \label{fig:rk2}}
	\end{subfigure}
	\hspace{1cm}
    \begin{subfigure}[t]{0.45\textwidth}
		\includegraphics{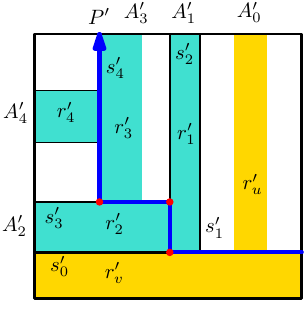}
		\caption{Layout \layoutp, where $r'_v$ is a pivot. \label{fig:rkprime2}}
	\end{subfigure}
	\caption{The construction of \layoutp from \layout.}
\end{figure}

We incrementally construct an $x$- and $y$-monotone increasing directed path $P$ (i.e., a staircase) starting with edge $s_0$, directed to its right endpoint $p_0$. Initially, let $P=\{s_0\}$ and $i:=0$. While $p_i$ is not in the top or right side of $R$, let $p_{i+1}$ be the top or right endpoint of segment $s_i$, append the edge $p_ip_{i+1}$ to $P$, and let $s_{i+1}$ be the segment orthogonal to $s_i$ that contains $p_{i+1}$. Since the path $P$ is $x$- and $y$-monotonically increasing, it does not revisit any segment. Thus the recursion terminates, and $P$ reaches the top or right side of $R$.

Let $s_{-1}$ denote the right side of $r_u$. We claim that for $i=1,\ldots , k$, if $s_i$ is vertical, its bottom endpoint is on the bottom side of $R$, and if $s_i$ is horizontal, its left endpoint is on the left side of $R$.
The claim clearly holds for $i=0$, since $s_0$ is the bottom side of $r_v$, which is in contact with the left side of $R$. Suppose for contradiction that the claim holds for $s_{i-1}$ but not for $s_i$. Then the clockwise or counterclockwise winding path starting with $s_i$ would create a windmill (as it can cross neither $s_{i-1}$ nor $s_{i-2}$, contradicting the assumption that \layout is sliceable.

The segments $s_0,s_1,\ldots , s_k$ jointly form a one-sided sliceable layout, that is, they subdivide $R$ into $k+2$ rectangular regions, each of which contains a sublayout of \layout. One of these regions is $r_v$. Label the remaining $k$ regions by $A_0,A_1,\ldots ,A_k$ in the order in which they occur along $P$; see Fig.~\ref{fig:rk2}. In particular, we have $r_u\subset A_0$. For $i=1,\ldots , k$,  region $A_i$ is bounded by $\partial R$ and segments $s_i$, $s_{i+1}$, and $s_{i+2}$ (if they exist); and $A_k$ is adjacent to the top-right corner of $R$. Because \layout is one-sided, segment $s_i$ is a side of a rectanglular face that we denote by $r_i$, for $i=1,\ldots , k$; and $r_i\subseteq A_i$ as the opposite side of $s_i$ is subdivided by segment $s_{i-1}$.

Furthermore, we claim that $A_k=r_k$. Indeed, $A_k$ is bounded by segment $s_k$ and three sides of $B$. If $A_k\neq r_k$, then $r_k$ separates the subarrangement $A\setminus r_k$ from $r$. This means that $v_{r_k}$ would be a cut vertex in $G(\LL)$, contradicting the assumption that $G(\LL)$ is 2-connected.

We recursively construct a one-sided sliceable \layoutp by placing rectangles and subarrangement corresponding to those in \layout such that $G(\LL)\simeq G(\LL')$; refer to Fig.~\ref{fig:rkprime2}. Let $R'$ be the bounding box of \layoutp. First subdivide $R'$
by a horizontal segment $s'_0$; and let $r_v'$ be the rectangle below $s_0'$. This ensures that $s_0'$ is the root slice and $r_v'$ is a pivot. Subdivide the region  above $s_0'$ by a vertical segment $s_1'$ into two rectangular regions. Denote the right region by $A_0'$,
and subdivide the left region as follows: For $i=2,\ldots ,k$, recursively subdivide the rectangle incident to the top-left corner of $R'$ by a segment $s_i'$ orthogonal to $s_i$.

Segments $s_0',\ldots , s_k'$ jointly subdivide $R'$ into $k+2$ rectangular regions: $r_v'$ and $A_0', A_1',\ldots , A_k'$ in the order in which they are created, where $A_k'$ is incident to the top-left corner of $R'$; and all other regions are in contact with either the left or the top side of $R'$. We insert a sublayout in each region $A_i'$.
First insert a $180^\circ$-rotated affine copy of $A_0$ into $A_0'$.
For $i=1,\ldots ,k-1$, insert $r_i'$ into $A_i'$ such that its top or left side is $s_{i+1}'$; and if $A_{i-2}\setminus r_{i-2}$ is nonempty, insert an affine copy of the sublayout $A_{i-1}\setminus r_{i-1}$ into $A_k'$.
Finally, for $i=k$, subdivide $A_k'$ into three rectangles by slices orthogonal to $s_k'$:
If $A_{k-2}\setminus r_{k-2}$ or $A_{k-1}\setminus r_{k-1}$ is nonempty, insert an affine copy in the first and third rectangle in $A_k'$; and fill all remaining space by $r_k'$. This completes the construction of layout \layoutp (see Fig.~\ref{fig:rkprime2}).
By construction, we have $G(\LL')\simeq G(\LL)$.

It remains to track the rectangles incident to the corners of \layout and \layoutp.
In the original layout \layout, the rectangular face $r_k$ is incident to two corners of $R$.
Assume, without loss of generality, that $r_k$ is incident to the two top corners of $R$ (as in Fig.~\ref{fig:rk2}), and two distinct rectangular faces $r_{\mathrm{left}}\subset A_0$ and $r_{\mathrm{right}}\subset A_{k-1}$ are incident to the bottom-left and bottom-right corners of $R$, respectively.
The sublayout $A_0$ was inserted into $A_0'$ after a $180^\circ$ rotation, and so $r_{\mathrm{left}}'\subset A_0'$ is incident to the top-right corner in \layoutp.
If $r_{\mathrm{right}}\subset A_{k-1}\setminus r_{k-1}$, then
$A_{k-1}\setminus r_{k-1}$ is nonempty and it was inserted into the top third of
$A_k'$ after a $180^\circ$ rotation, and so $r_{\mathrm{right}}'$ is incident to the top-left corner in \layoutp.
Otherwise $A_{k-1}=r_{k-1}$, and then $r_{\mathrm{right}}=r_{k-1}$.
In this case  $r_k'$ is incident to the top-left corner in \layoutp. However,
we can modify \layout by extending $r_{k-1}$ and $s_{k-1}$ to the top side of $R$, and
obtain a one-sided sliceable layout $\LL''$ with $G(\LL')\simeq G(\LL)$ in which $r_{\mathrm{right}}''=r_{k-1}''$ is a pivot. This completes the proof in Case~2.
\end{proof}

\begin{lemma}
\label{lem:two-distinct-corners}
Assume that an instance $(G,C,P)$ is realizable; $G$ is 2-connected; $|V(G)|\geq 4$; there exist two adjacent vertices, $u$ and $v$, such that $C(u)=C(v)=1$, and $C(w)=0$ for all other vertices. 
Then $u$ or $v$ is a pivot; or else
$G$ has a 2-cut and a vertex of an arbitrary 2-cut is a pivot.
\end{lemma}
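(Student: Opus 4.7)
The plan is to assume that neither $u$ nor $v$ is a pivot and to derive that $G$ has a 2-cut; the second clause of the lemma then follows at once from Lemma~\ref{lem:2cut}. Apply Lemma~\ref{lem:cut} to the realization \layout; since $G$ is 2-connected no vertex satisfies property~(I), so some vertex $w$ satisfies~(II) and is a pivot of \layout. By our standing assumption $w\notin\{u,v\}$, so $r_w$ is bounded by three sides of the bounding box $R$ together with a slice $s$ and occupies two ccw consecutive corners of $R$. Because $(u,v)\in P$ forces $r_u$ and $r_v$ to sit at two ccw consecutive corners of $R$, they must be placed at the remaining two. After a rotation we may assume $r_w$ occupies the two top corners with $s$ horizontal, while $r_u$ sits at the bottom-left corner and $r_v$ at the bottom-right.

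Now consider the sublayout $\LL^-$ below $s$, whose bounding box $R^-$ shares its bottom, left, and right sides with $R$ and whose top side is $s$. This $\LL^-$ is one-sided and sliceable, its dual graph is $G-w$, and $r_u,r_v$ occupy the two bottom corners of $R^-$. Apply Lemma~\ref{lem:cut} to $\LL^-$ (valid since $|V(G-w)|\geq 3$) to obtain a vertex $w'$ with property~(I) or~(II). If $w'$ is a cut vertex of $G-w$, then $G-\{w,w'\}=(G-w)-w'$ is disconnected, so $\{w,w'\}$ is a 2-cut of $G$ and we are finished establishing the existence of a 2-cut. Otherwise $w'$ is a pivot of $\LL^-$, and $r_{w'}$ is bounded by three sides of $R^-$ and a slice of $\LL^-$.

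The heart of the argument is the case analysis on the pivot $w'$. If $w'\notin\{u,v\}$, the two ccw consecutive corners of $R^-$ incident to $r_{w'}$ cannot include either bottom corner (the bottom corners are already occupied by $r_u$ and $r_v$), so $r_{w'}$ must occupy the two top corners of $R^-$ and therefore span the entire length of $s$. But then $r_{w'}$ is the unique neighbor of $r_w$ in $G$, which gives $\deg_G(w)=1$, contradicting the 2-connectedness of $G$ together with $|V(G)|\geq 4$. Hence $w'\in\{u,v\}$; by symmetry assume $w'=u$. Since $r_v$ occupies the bottom-right corner of $R^-$, the rectangle $r_u$ cannot touch the right side of $R^-$, so the three sides of $R^-$ bounding $r_u$ must be the left, the bottom, and $s$; in particular $r_u$ is bounded on the right by a vertical segment $s'$ of $\LL^-$ running from the bottom of $R$ up to $s$. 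Extend $s'$ upward through the interior of $r_w$ until it reaches the top of $R$: the resulting layout $\LL'$ places $r_u$ in the full left column of $R$, so that $r_u$ is bounded by the left, bottom, and top sides of $R$ and by the full vertical slice $s'$, while $r_w$ shrinks to the region above $s$ and to the right of $s'$. The only affected contact is $r_u\sim r_w$, which merely migrates from the top edge of the original $r_u$ (a horizontal arc on $s$) to the portion of $s'$ above $s$ (a vertical arc), so $G(\LL')\simeq G$; moreover $\LL'$ is one-sided and sliceable with $s'$ as a first slice. Thus $u$ is a pivot of $G$, contradicting our assumption.

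Consequently $G$ admits a 2-cut whenever neither $u$ nor $v$ is a pivot. For the final clause, apply Lemma~\ref{lem:2cut} to any 2-cut $\{x,y\}$: it produces a one-sided sliceable layout with dual $G$ whose first slice separates $r_x$ from $r_y$, exhibiting one of $x,y$ as a pivot. The main obstacle is the explicit surgery in the subcase $w'=u$, where one must carefully verify that extending the slice $s'$ through $r_w$ preserves every contact of the dual graph and keeps the layout one-sided and sliceable; the remaining ingredients are one application of Lemma~\ref{lem:cut}, the short degree argument ruling out $w'\notin\{u,v\}$, and a direct appeal to Lemma~\ref{lem:2cut}.
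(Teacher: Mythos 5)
Your proof is correct in its essentials but follows a genuinely different route from the paper's. The paper works entirely inside the original layout: after placing $r_u$ and $r_v$ at two adjacent corners it introduces the pivot rectangle $r_p$ spanning the two opposite corners, reverses the contact between $r_u$ (or $r_v$) and $r_p$ when such a contact exists, and otherwise locates a 2-cut geometrically as the pair formed by $r_p$ and the rectangle whose side is the topmost horizontal slice reaching the left side of $r_p$. You instead peel off the top-level pivot $w$, descend into the sublayout $\LL^-$ below its slice, and obtain the 2-cut combinatorially as $\{w,w'\}$ with $w'$ a cut vertex of $G-w$; your degree-one contradiction disposes of the case $w'\notin\{u,v\}$, and your slice-extension surgery in the case $w'\in\{u,v\}$ plays the role of the paper's contact reversal. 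Both arguments deliver the same conclusion; yours trades the paper's geometric hunt for the separating slice for a second application of Lemma~\ref{lem:cut} one level down, at the cost of having to verify the surgery explicitly (which does check out: only the $r_u$--$r_w$ contact migrates from a horizontal to a vertical arc, and one-sidedness and sliceability are preserved with the extended $s'$ as the new first slice).

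Two small points. First, in the subcase $w'=u$ you should state explicitly that $r_u$ contains the bottom-left corner of $R^-$ and hence already touches the left and bottom sides, so that the one excluded side of $R^-$ must be the right one; ruling out contact with the right side via $r_v$ alone does not by itself exclude the configuration in which $r_u$ is a horizontal strip along the top of $R^-$. Second, and more substantively for how this lemma is used downstream: in the final clause you invoke only the first part of Lemma~\ref{lem:2cut}, which certifies that a vertex of the 2-cut is a pivot of \emph{some} layout with dual $G$. The algorithm needs the stronger fact that the instance $(G,C,P)$ --- with $r_u$ and $r_v$ still pinned to corners and still consecutive --- remains realizable with that pivot; this is exactly what the ``Furthermore'' clause of Lemma~\ref{lem:2cut} supplies (its hypothesis, that two rectangles are each incident to a single corner, holds here), and the paper's proof cites it for precisely that reason. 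Add that one sentence and your argument is complete.
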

\begin{proof}
Let \layout be a one-sided sliceable layout that realizes $(G, C, P)$. If $r_u$ and $r_v$ contain two opposite corners of \layout, then the maximal line segment that separates them is a slice of \layout, and so $r_u$ or $r_v$ must also contain another corner, and thus be a pivot.

We may assume, then, that $r_u$ and $r_v$ contain adjacent corners of \layout, which we may assume, without loss of generality, to be the top-left and bottom-left corners, respectively.  If either spans the width of \layout and contains another corner, then it corresponds to a pivot and the proof is complete.

\begin{figure}[htbp]	
	\begin{subfigure}[t]{0.2\textwidth}
		\centering
		\includegraphics{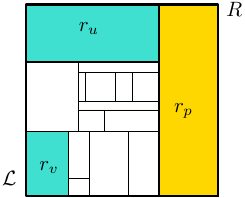}
		\caption{\label{fig:ccone1}}
	\end{subfigure}
	\hspace{5mm}
    \begin{subfigure}[t]{0.2\textwidth}
    	\centering
		\includegraphics{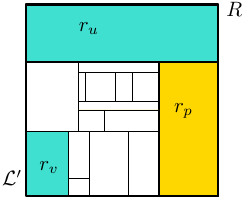}
		\caption{\label{fig:ccone2}}
	\end{subfigure}
	\hspace{5mm}
	\begin{subfigure}[t]{0.2\textwidth}
		\centering
		\includegraphics{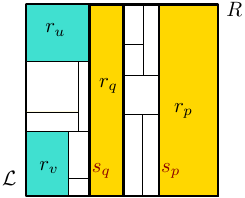}
		\caption{\label{fig:ccone3}}
	\end{subfigure}
	\hspace{5mm}
    \begin{subfigure}[t]{0.2\textwidth}
    	\centering
		\includegraphics{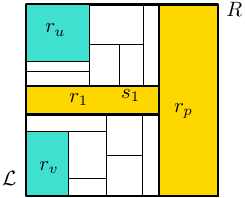}
		\caption{\label{fig:ccone4}}
	\end{subfigure}
		\caption{(a) Rectangles $r_u$, $r_u$, and $r_p$ jointly contain all four corners of $R$.
		(b) Modifying the boundary between $r_u$ and $r_p$.
		(c) If both $s_p$ and $s_q$ are slices, then $q$ would be a cut vertex.
		(d) The topmost segment $s_1$ between the left edge of $R$ and the left edge of $r_p$ is a side of $r_1$. \label{fig:ccone}}
\end{figure}

Assume that $r_u$ and $r_v$ each contain only one corner of \layout; see Fig.~\ref{fig:ccone}. There is some rectangular face $r_p$ that contains the top-right and bottom-right corners of \layout, or else there would be no pivot, contradicting Lemma~\ref{lem:cut}. If $r_u$ and $r_p$ are adjacent, then we can modify \layout by removing the right side of $r_u$ and extending the bottom side of $r_u$ to the right side of $R$; refer to Figs.~\ref{fig:ccone1}--\ref{fig:ccone2}. This yields a layout $\LL'$ realizing $(G,C,P)$ in which $r_u$ contains two corners, and thus $u$ is a pivot. The same argument can be made for $v$ being a pivot if $r_v$ and $r_p$ are adjacent.

In the remainder of the proof, we assume that neither $r_u$ nor $r_v$ is in contact with $r_p$.
The left side of $r_p$ is a slice as it connects the top and bottom sides of $R$. Note that $R$ cannot have any other slice, as it would be a side of some rectangle $r_q$, which does not contain any corner (cf.\ Fig.~\ref{fig:ccone3}); and so $q$ would be a cut vertex, contrary to out assumption that $G$ is 2-connected. This implies, using the fact that \layout is sliceable, that \layout has at least one horizontal segment from the left side of $R$ to the left side of $r_p$. Let $s_1$ be the topmost such segment; see Fig.~\ref{fig:ccone4}. As \layout is one-sided, then $s_1$ must be the side of some rectangular face $r_1$. The rectangular face $r_1$ can be neither $r_u$ nor $r_v$, since they are not in contact with $r_p$. The vertices in $G$ corresponding to $r_1$ and $r_p$ form a 2-cut, so $G$ has a 2-cut. Lemma~\ref{lem:2cut} guarantees that for any 2-cut, there is a one-sided sliceable layout \layoutp with $G(\LL')\simeq G(\LL)$ in which one of the vertices of the 2-cut is a pivot. Such a layout \layoutp realizes $(G, C, P)$ if the rectangular faces corresponding to $u$ and $v$ are each incident to some corners of \layoutp.
Since $r_u$ and $r_v$ are each incident to a single corner of \layout, then Lemma~\ref{lem:2cut} further guarantees that \layoutp is a one-sided sliceable layout that realizes $(G, C, P)$ and one of the vertices in the 2-cut is a pivot; or else $u$ or $v$ is a pivot.
\end{proof}

Recall that for an instance $I=(G,C,P)$, we defined the set $K=\{v\in V(G):C(v)>0\}$ of vertices that must contain corners in the realization. We show that if we already have 3 or 4 vertices in $K$, then $G$ has a cut vertex or a vertex in $K$ is a pivot.

\begin{lemma}\label{lem:no-cut-vertices}
Assume that $(G,C,P)$ is realizable and $|V(G)|\geq 2$.
\begin{enumerate}
\item If $|K| = 4$, then $G$ has a cut vertex.
\item If $|K| = 3$, then $G$ has a cut vertex or some vertex $v \in K$ is a pivot.
\end{enumerate}
\end{lemma}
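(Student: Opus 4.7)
The plan is a simple pigeonhole argument on the four corners of the bounding box $R$, combined with Lemma~\ref{lem:cut}. The key quantitative fact is that, since $C$ is a \emph{lower bound} on the number of corners of each rectangle and $R$ has only four corners in total, any realization satisfies $\sum_{v\in V(G)} C(v) \le 4$. I would then apply Lemma~\ref{lem:cut} to \layout: in any realization, some vertex satisfies either property (I) (cut vertex, with $C(v)=0$ and $r_v$ bounded by two parallel sides of $R$ and two slices) or property (II) ($r_v$ bounded by three sides of $R$ and a slice, hence covering two adjacent corners of $R$).

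For part~1, suppose $|K|=4$. Since each $v\in K$ contributes at least $1$ to $\sum C(v)\le 4$, we must have $C(v)=1$ for every $v\in K$ and $C(w)=0$ for $w\notin K$. Then the four corners of $R$ are distributed among four distinct rectangles of \layout, one corner each, so no rectangle of \layout is incident to two corners. Hence no vertex of $G$ satisfies property~(II) in \layout, and Lemma~\ref{lem:cut} forces some vertex to satisfy property~(I), i.e., $G$ has a cut vertex.

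For part~2, suppose $|K|=3$ and that $G$ has no cut vertex. Then Lemma~\ref{lem:cut} yields a vertex $v^*$ whose rectangle $r_{v^*}$ in \layout is bounded by three sides of $R$ and a slice, so $r_{v^*}$ occupies two adjacent corners of $R$; by definition $v^*$ is already a pivot, and it remains to show $v^*\in K$. Let $S\subseteq V(G)$ denote the set of vertices whose rectangle in \layout is incident to at least one corner. Since $C$ is a lower bound on actual corner counts, $K\subseteq S$; and since $r_{v^*}$ already accounts for two of the four corners, at most two further rectangles can appear in $S$, giving $|S|\le 3$. Combined with $|K|=3$, this forces $K=S$, so $v^*\in K$, which exhibits a pivot in $K$.

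I do not expect a real technical obstacle here: the content of the lemma is essentially pigeonhole plus Lemma~\ref{lem:cut}. The only subtlety worth being careful about is the distinction between $K$, which the instance \emph{records}, and the set $S$ of vertices actually incident to a corner in a particular realization \layout. Since $C$ provides only a lower bound, we have $K\subseteq S$ rather than equality a priori, and the whole force of the argument in part~2 is that the bound $|S|\le 3$ coming from the pivot pushes this inclusion to equality.
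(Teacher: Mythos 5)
Your proof is correct and follows essentially the same route as the paper's: apply Lemma~\ref{lem:cut} to conclude that, absent a cut vertex, some rectangle occupies two corners of $R$, and then pigeonhole on the four corners (the paper phrases part~1 contrapositively as ``no cut vertex implies $|K|\le 3$,'' which is the same argument). Your explicit distinction between $K$ and the set $S$ of actually corner-incident vertices is a slightly more careful writeup of the step the paper compresses into its final sentence.
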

\begin{proof}
Let \layout be a realization of $(G,C,P)$.
If $G$ has a cut vertex, the proof is complete. Assume otherwise.
Then, by Lemma~\ref{lem:cut}, there exists a rectangular face $r_u$ is incident to two corners of \layout, and in particular $u\in V(G)$ is a pivot.
As $R$ has only four corners, each of which is incident to a unique rectangular face in \layout, then at most two additional rectangular faces in \layout are incident to corners, hence  $|K|\leq 3$.

Assume that $|K| = 3$. Since $R$ has only four corners, each of which is incident to a rectangular face in \layout, the pivot $u$ is one of the three vertices in $K$.
\end{proof}

\subsection{Recognition Algorithm}
\label{ssec:alg}
We are now ready to prove the main result of this section.

\algotheorem*
\begin{proof}
Given a graph $G$, we can decide in $O(n)$ time whether $G$ is a proper graph~\cite{Hasan0K13,Nishizeki013,RahmanNN98,RahmanNN02}. If $G$ is proper, then it is a connected plane graph in which all bounded faces are triangles. Let an initial instance be $I=(G,C,P)$, where $C(v)=0$ for all vertices $v$, and $P=\emptyset$.
We run the recursive algorithm \ALG$(G,C,P)$ below.

\begin{algorithm}[htp]\DontPrintSemicolon
\ALG$(G,C,P)$\;
 \SetKwInOut{Input}{input}
 \SetKwInOut{Output}{output}
\Input{a near triangulation $G$, a corner count $C\colon V(G)\rightarrow \mathbb{N}_0$, and a set $P$ of ordered pairs of vertices on the outer face of $G$}
\Output{whether $G$ has an aspect ratio universal dual}
    \Begin{
        \If{$|V(G)| = 1$}{\Return{True}\;}
        \ElseIf{$G$ has a vertex $v$ with $C(v) > 2$}{\Return{False}\;}
        \ElseIf{$G$ has a cut vertex $v$}{\textsc{Split}$(G,C,P;v)$ yields $(G_1,C_1,P_1)$ and $(G_2,C_2,P_2)$\;
        \Return{\ALG$(G_1,C_1,P_1)$ $\land$ \ALG$(G_2,C_2,P_2)$}}
        \ElseIf{$G$ has a vertex $v$ with $C(v)=2$}{        \Return{\ALG$($\textsc{Remove}$(G,C,P;v))$}}
        \ElseIf{$P=\{(u,v)\}$ with $C(u)=C(v)=1$ and $|K|=2$}{\ForEach{$w\in \{u,v\}$}{
                \If{\ALG$($\textsc{Remove}$(G,C,P;w))$}{\Return{True}}
                }
                \If{$G$ has a 2-cut}
                {Let $\{w_1,w_2\}$ be an arbitrary 2-cut of $G$\\
                {\ForEach{$w\in \{w_1,w_2\}$}{
                \If{\ALG$($\textsc{Remove}$(G,C,P;w))$}{\Return{True}}
            }}}}
        \ElseIf{$|K| = 3$}{\ForEach{$v \in K$}{
               \If{\ALG$($\textsc{Remove}$(G,C,P;v))$}{\Return{True}}}}
        \ElseIf{$|K| =0$}{\ForEach{vertex $v$ of the outer face of $G$}{
                \If{\ALG$($\textsc{Remove}$(G,C,P;v))$}{\Return{True}}
                }}
        \Return{False}}
\end{algorithm}

\smallskip\noindent\emph{Correctness.}
We argue that algorithm \ALG$(G,C,P)$ correctly reports whether an instance $(G,C,P)$ is realizable.

\textbf{Lines~3--4.}
A graph with only one vertex corresponds to a layout containing a single rectangle, which is clearly aspect ratio universal.

\textbf{Lines~5--6.}
A rectangle that contains three or more corners of a layout must be the only rectangle in the layout. However, the algorithm reaches this step only if there are multiple vertices in the graph (Lines 3--4), so a vertex with a corner count of 3 or more is a contradiction.

\textbf{Lines~7--9.} Lemma~\ref{lem:split} established that an instance is realizable if and only if both instances produced by a \textsc{Split} operation are realizable.

\textbf{Lines~10--26.}
In the absence of a cut vertex, we try to find a pivot. Lemma~\ref{lem:remove} established that an instance is realizable if and only if it remains realizable after removing a pivot.
\begin{itemize}
    \item[(A)] \textbf{Lines~10--11.} Lemma~\ref{lem:two-corner-removable} shows that a vertex $v$ with $C(v)=2$ is a pivot.

    \item[(B)] \textbf{Lines~12--20.}   Lemma~\ref{lem:two-distinct-corners} shows that if $C(u)=C(v)=1$ and the corner count of all other vertices is zero (hence $K=\{u,v\}$ and $|K|=2$), then $u$ or $v$ is a pivot, or at least one vertex in every 2-cut is a pivot. The algorithm tests whether $u$ or $v$ is a pivot, or any vertex of an arbitrary 2-cut is a pivot.

    \item[(D)] \textbf{Lines~21--24.}
        By Lemma~\ref{lem:no-cut-vertices}, when $|K| = 3$, a vertex in $K$ must be a cut vertex or a pivot, or else the instance is not realizable.
    \item[(E)] \textbf{Lines~25--28.}
        If we have no information about the corners and there is no cut vertex (that is, $C(v)=0$ for all $v\in V(G)$, hence $K=\emptyset$), then one of the vertices in the outer face must correspond to a pivot by Lemma~\ref{lem:cut}, or else $(G,C,P)$ is not realizable.

\end{itemize}

\textbf{Line~29.}
If we find neither a cut vertex nor a pivot, then the instance is not realizable by Lemma~\ref{lem:remove}.

\paragraph{Data Structures.}
Algorithm \ALG$(G,C,P)$ recursively removes a cut vertex or a pivot $v$ of $G$, and recurses on connected components of $G-v$. The recursive calls are supported by an in-place data structure that dynamically maintains the instance $(G,C,P)$. Specifically, we maintain $G$ in the classical \emph{doubly-link edge list} (for short \emph{DCEL}) data structure~\cite{GS85,Wei85}; see also~\cite{Kettner99}. It maintains the incidences between vertices, edges, and faces; and supports edge deletion in $O(1)$ time, hence the deletion of a vertex $v$ in $O(\deg(v))$ time.

We augment the classical DCEL data structure with three binary indicator variables for (1) vertices of the outer face, (2) cut vertices, and (3) edges corresponding to 2-cuts. The DCEL data structure already maintains the cyclic list of vertices incident to the outer face. The algorithm maintains the property that $G$ is an internally triangulated plane graph. This in turn implies that a vertex $v$ is a cut vertex if and only if the outer face appears at least twice in the cyclic order of all faces incident to $v$. Similarly, $\{u,v\}$ is a 2-cut if and only if $uv$ is an edge and the outer face appears at least twice in the cyclic order of all faces incident to $uv$. Consequently, we can identify new cut vertices (resp., edges corresponding to 2-cuts) in $O(1)$ time whenever a new vertex or edge becomes incident to the outer face. The \textsc{Split} and \textsc{Remove} operations each remove a cut vertex or a pivot, which is incident to the outer face, and so the distance of any vertex to the outer face monotonically decreases. This implies that each indicator variable changes at most once, hence all three variables can be maintained in $O(n)$ total time in any descending path of the recursion tree.

\paragraph{Runtime Analysis.}
Let $T$ be the recursion tree of the algorithm  \ALG$(G,C,P)$ for some initial instance $(G,C,P)$. The number of vertices in $G$ strictly decreases along each descending path of $T$, and so the depth of the tree is $O(n)$.

We distinguish between two types of nodes in $T$: If a step in lines~8--9 is executed, then the vertex set $V(G)$ is partitioned among the recursive subproblems; we call these \emph{partition nodes} of $T$. In the steps in (B) lines~12--20, (D) lines~21--24, and (E) lines~25--28, however, $|V(G)|-1$ vertices appear in all four, three, or $O(|V(G)|)$ recursive subproblems; we call these \emph{duplication nodes} of $T$. The algorithm performs a DFS traversal of $T$ sequentially, and so the same in-place DCEL data structure of size $O(n)$ can support the entire course of the algorithm.

We first analyze the special case that $T$ does not have duplication nodes. Then $T$ is a binary tree with $O(n)$ nodes. Overall, the total time taken by maintaining $G$ (in the DCEL data structure) and the annotation $C$ and $P$ is $O(n)$ over the course of the algorithm.

Next, we analyze the impact of duplication nodes.
We claim that steps (B), (D) and (E) are reached at most once.
Note first that the total corner count $C(V)=\sum_{v\in V}C(v)$ monotonically increases along any descending path of $T$.
In the initial instance, we have $C(V)=0$. All \textsc{Remove} and \textsc{Split} operations produce subproblems with $C(V)\geq 2$.
Every \textsc{Remove} operation removes one vertex $v$ with $C(v)\leq 2$ and increments the total corner count by two.

Due to the steps in (A) lines 10-11, we may assume that $C(v)\in \{0,1\}$ for all $v\in V(G)$, hence $C(V)=|K|$ in steps (B), (D), and (E).
Step (B) can be reached only when $C(V)=2$ and two distinct vertices have positive corner counts. In this case, operation \textsc{Remove}$(G,C,P;v)$ in line~14 or~19 removes at most one of these vertices, and increments the corner counts by two, yielding a subproblem with $C(V)=3$. Similarly, step (D) can be reached only when $C(V)=3$, and produces a subproblem with $C(V)=4$.
Finally, step (E) can be reached when $C(V)=0$, and \textsc{Remove}$(G,C,P;v)$ in line~27 produces a yields a subproblem with $C(V)=2$.
As steps (B), (D), and (E) produce at most four, three, and $O(|V(G)|)$ recursive subproblems, resp., the duplication nodes increase the upper bound on the runtime by a factor of $4\cdot 3\cdot n=12n$, hence it is $O(n^2)$.
\end{proof}

\section{Conclusions}
\label{sec:con}

We have shown that a layout \layout is weakly (resp., strongly) ARU if and only if \layout is sliceable (resp., one-sided and sliceable); and one can decide in $O(n^2)$ time whether a given graph $G$ on $n$ vertices is the dual graph of a one-sided sliceable layout. An immediate open problem is whether the runtime can be improved. Recall that no polynomial-time algorithm is currently known for recognizing the dual graphs of sliceable layouts~\cite{DasguptaS01,KustersS15,YeapS95} and one-sided layouts~\cite{EppsteinMSV12}.
It remains open to settle the computational complexity of these problems.

Cut vertices and 2-cuts play a crucial role in our algorithm. We can show (Proposition~\ref{pro:cuts} below) that the dual graphs of one-sided sliceable layouts have vertex cuts of size at most three. In contrast, the minimum vertex cut in the dual graphs of one-sided layouts (resp., sliceable layouts) is unbounded.
Perhaps 3-cuts can be utilized to speed up our algorithm.

\begin{proposition}\label{pro:cuts}
Let $G$ be the dual graph of a one-sided sliceable layout. If $|V(G)|\geq 4$, then $G$ contains a vertex cut of size at most 3.
\end{proposition}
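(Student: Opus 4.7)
The plan is to apply Lemma~\ref{lem:cut} to \layout. If $G$ has a cut vertex, then we have a vertex cut of size $1$ and are done. Otherwise, Lemma~\ref{lem:cut} provides a pivot $v$ with $r_v$ bounded by three sides of $R$ and a slice $s$; the neighborhood $N(v) = \{u_1, \ldots, u_k\}$ consists exactly of the rectangles along $s$ on the side opposite $r_v$. When $k \leq 3$ and at least one rectangle lies outside $\{v\} \cup N(v)$, the set $N(v)$ is a cut of size at most $3$ separating $v$ from the remaining rectangles. In the edge case $|V| = k+1 \leq 4$, the rectangles $u_1, \ldots, u_k$ are stacked in a row along $s$, so $G \simeq K_1 \vee P_3$, which contains the 2-cut $\{v, u_2\}$ with $u_2$ the middle of the path.

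The main case is $k \geq 4$. Here I would apply Lemma~\ref{lem:cut} to the sublayout \layoutp on the opposite side of $s$. If $G(\LL')$ has a cut vertex $w$, then by Lemma~\ref{lem:cut} $r_w$ spans two opposite sides of the bounding box of \layoutp; these cannot be the sides inherited from $\partial R$ perpendicular to $s$ (else $w$ would be a cut vertex of $G$, contradicting our assumption), so $r_w$ spans between $s$ and the side of $R$ opposite $s$. Then $\{v, w\}$ will be a 2-cut of $G$: removing $w$ disconnects the two sublayouts on its sides within \layoutp, and removing $v$ destroys the only remaining bridge through $r_v$. If instead $G(\LL')$ has no cut vertex, a pivot $v'$ of \layoutp exists, and $r_{v'}$ cannot span the side $s$ of \layoutp (else $|N(v)| = 1$, contradicting $k \geq 4$), so $r_{v'}$ spans one of the three sides of \layoutp inherited from $\partial R$.

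Iterating this analysis on successively smaller sublayouts, I expect to reach one of two configurations: (a) some pivot encountered has degree at most $3$ in $G$, giving a cut via its neighborhood, or (b) two parallel pivots $v, v'$ sit on opposite sides of $R$ with the middle sublayout between them containing a ``spanning column'' $w$ from the slice of $v$ to the slice of $v'$ that separates the left and right portions. The column $w$ would be supplied either as a cut vertex of the middle sublayout (forced to span top-to-bottom by a repeat of the earlier argument) or as the innermost rectangle produced by further peeling. In configuration (b), the triple $\{v, v', w\}$ will be a 3-cut of $G$. The hard part will be the bookkeeping for this recursion, particularly verifying that an \emph{interior} spanning column always exists before the recursion exhausts the rectangles; this should be handled by observing that if the middle sublayout had no interior spanning column, repeated application of Lemma~\ref{lem:cut} would peel off all its boundary pivots and yield a last remaining rectangle of degree at most $3$, whose neighborhood then serves as the desired cut.
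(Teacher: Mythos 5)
Your opening steps are sound, and your route is genuinely different from the paper's (the paper does not peel pivots from the boundary at all: it locates the \emph{first interior region} created in the slicing sequence and reads off a 3-cut from the rectangles flanking that region). The cut-vertex case, the case of a pivot of degree at most $3$, the $K_1\vee P_3$ edge case, and the sub-case where $G(\mathcal{L}')$ has a cut vertex $w$ (forcing $r_w$ to span from $s$ to the opposite side and yielding the 2-cut $\{v,w\}$) are all correct. The gap is the main recursion: everything after ``Iterating this analysis'' is a plan rather than an argument, and the one concrete claim you offer to guarantee that configuration (a) or (b) is always reached --- namely, that if the middle sublayout has no interior spanning column then peeling its boundary pivots ends at a rectangle of degree at most $3$ --- is false.

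Here is a counterexample to that claim. Let $R=[0,10]^2$, $r_v=[0,10]\times[0,1]$, $r_{v'}=[0,10]\times[9,10]$, $m_1=[0,1]\times[1,9]$, $m_2=[9,10]\times[1,9]$, $m_4=[1,9]\times[5,9]$, $m_3=[1,4]\times[1,5]$, $m_5=[4,9]\times[1,5]$. This layout is one-sided and sliceable, $G$ has no cut vertex, and $\deg(v)=4$, so you enter your main case with $v$ and then $v'$ as parallel pivots. The middle sublayout $\{m_1,\dots,m_5\}$ has no interior spanning column ($m_1$ and $m_2$ span $s$ to $s'$ but sit on the boundary; $m_3,m_4,m_5$ do not span), and peeling its boundary pivots in the only possible order ($m_1$, $m_2$, then $m_4$, then one of $m_3,m_5$) terminates at a rectangle of degree $4$ in $G$ (e.g.\ $N(m_3)=\{v,m_1,m_4,m_5\}$), whose neighborhood is not a cut of size $3$. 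The proposition survives here only because $N(v')=\{m_1,m_4,m_2\}$ happens to be a 3-cut, i.e.\ your configuration (a) fires for the \emph{second} pivot --- but nothing in your write-up proves that (a) or (b) must always occur, and the fallback you rely on for exhaustiveness is exactly the step that fails. Closing the gap along your lines seems possible, but it requires a genuine induction on the successive sub-boxes (tracking which of their sides are slices versus portions of $\partial R$, and using $\deg(v)\ge 4$ and the absence of cut vertices to bound how many boundary columns and strips can be peeled before an interior spanning column or a degree-$\le 3$ pivot is forced); that bookkeeping is the actual content of the proof, not a routine verification.
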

\begin{proof}
Let \layout be a one-sided sliceable layout with $n\geq 4$ rectangular faces in a bounding box $R$, and with dual graph $G=G(\LL)$. For a rectangular face $r$ in \layout, let $v(r)$ denote the corresponding vertex in $G$. If $G$ is outerplanar, then either $G$ has a cut vertex, or $G$ is a triangulated $n$-cycle, hence any diagonal forms a 2-cut.
We may assume that $G$ has an interior vertex.

Consider a sequence of segments that recursively subdivide $R$ into the layout \layout; and let us focus on the first subdivision step that creates a rectangle $R_0$ in the interior of $R$. We may assume, without loss of generality, that $R_0$ is bounded by the segments $s_1,\ldots ,s_4$ in counterclockwise order, the subdivision along $s_4$ splits some rectangle into $R_0$ and $R_1$, and $s_4$ is the bottom side of $R_0$. Then the two endpoints of $s_4$ lie in the relative interiors of $s_1$ and $s_3$. Since \layout is one-sided, $s_1$ is the right side of some rectangular face $r_1$, and $s_3$ is the left side of some rectangular face $r_3$ of \layout.
Since $R_0$ is the first rectangle in the interior of $R$,
both $s_1$ and $s_2$ has an endpoint on the boundary of $R$; see Fig.~\ref{fig:3cut1}.

\begin{figure}[htbp]	
	\centering
	\begin{subfigure}[t]{0.3\textwidth}
		\centering
		\includegraphics{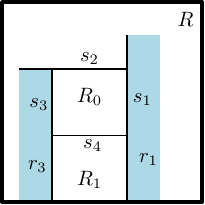}
		\caption{\label{fig:3cut1}}
	\end{subfigure}
	\quad
    \begin{subfigure}[t]{0.3\textwidth}
		\centering
		\includegraphics{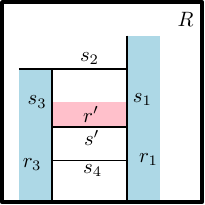}
		\caption{\label{fig:3cut2}}
	\end{subfigure}
	\quad
	\begin{subfigure}[t]{0.3\textwidth}
		\centering
		\includegraphics{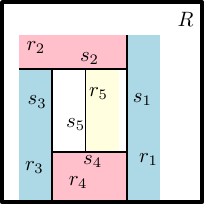}
		\caption{\label{fig:3cut3}}
	\end{subfigure}
	\caption{Schematic views of the arrangements in the proof of Proposition~\ref{pro:cuts}. \label{fig:3cuts}}
\end{figure}

We claim that at most two maximal horizontal segments in $\mathcal{L}$ intersect both $s_1$ and $s_3$ in the interior of $R$. Suppose, to the contrary, that three or more such segments (Fig.~\ref{fig:3cut2}). Let $s'$ be one of them other than the lowest or highest. Since \layout is one-sided, $s'$ is a side of some rectangular face $r'$ in \layout, which is adjacent to both $r_1$ and $r_3$, but not adjacent to the boundary of $R$, hence $\{v(r_1),v(r'),v(r_3)\}$ is a 3-cut in $G$.

In the remainder of the proof, we may assume that at most two maximal horizontal segments intersect both $s_1$ and $s_3$ in the interior of $R$. Consequently $s_2$ ($s_4$) is the highest (lowest) such segment. Since \layout is one-sided, $s_4$ is the side of some rectangular face $r_4$ of \layout. If $s_4$ is the bottom side of $r_4$, then  $\{v(r_1),v(r_2),v(r_3)\}$ is a 3-cut in $G$. We may assume that $s_4$ is the top side of a rectangle $r_4$ in \layout. Since no segment intersects both $s_1$ and $s_3$ below $s_4$, then $r_4=R_1$  (as in Fig.~\ref{fig:3cut3}).

If $R_0$ is not sliced further recursively, then
$\{v(r_1),v(R_0),v(r_3)\}$ is a 3-cut in $G$. Otherwise, let $s_5$ be the first segment that slices $R_0$. Segment $s_5$ cannot be horizontal, as it would intersect both $s_1$ and $s_3$, contrary to the assumption above. So $s_5$ is vertical (Fig.~\ref{fig:3cut3}), and it is a side of some rectangular face $r_5$ of \layout, which is adjacent to $s_2$ and $s_4$. This further implies that $s_2$ is the bottom side of a rectanglular face $r_2$ of \layout.
If $r_2$ is adjacent to the boundary of $R$, then $\{v(r_4),v(r_5),v(r_2)\}$ is a 3-cut; else $r_2$ lies in the interior of $R$ and $\{v(r_1),v(r_2),v(r_3)\}$ is a 3-cut in $G$.
\end{proof}


\end{document}